\mathchardef\mhyphen="2D % Define a "math hyphen"
\newcommand\newmathabbrev[2]{\newcommand{#1}{\ensuremath{#2}\xspace}}
\newcommand\cfont\mathsf
\newmathabbrev\p{\cfont{P}}
\newmathabbrev{\N}{\mathbb N}
\newmathabbrev\NP{\cfont{NP}}
\newmathabbrev\QPH{\cfont{QPH}}
\newmathabbrev\QPHpure{\cfont{pureQPH}}
\newmathabbrev\QCPH{\cfont{QCPH}}
\newmathabbrev\polyQCPH{\cfont{polyQCPH}}
\newmathabbrev\polyPH{\cfont{polyPH}}
\newmathabbrev\QEPH{\cfont{QEPH}}
\newmathabbrev\QMAH{\cfont{QMAH}}
\newmathabbrev\QAC{\cfont{QAC}_0}
\newmathabbrev\AC{\cfont{AC}}
\newmathabbrev\disagr{\mathsf{disagr}}
\newmathabbrev\SIP{\cfont{SIPSER}}
\newmathabbrev\DTIME{\cfont{DTIME}}
\newmathabbrev\tSAT{3\cfont{\mhyphen{}SAT}}
\newmathabbrev\MA{\cfont{MA}}
\newmathabbrev\AM{\cfont{AM}}
\newmathabbrev\NPDAG{\cfont{NP\mhyphen{}DAG}}
\newmathabbrev\QMADAG{\cfont{QMA\mhyphen{}DAG}}
\newmathabbrev\yes{\mathrm{yes}}
\newmathabbrev\no{\mathrm{no}}
\newmathabbrev\US{\cfont{US}}
\newmathabbrev\FP{\cfont{FP}}
\newmathabbrev\PP{\cfont{PP}}
\newmathabbrev\CeP{\cfont{C_=P}}
\newmathabbrev\coCeP{\cfont{coC_=P}}
\newmathabbrev\PH{\cfont{PH}}
\newmathabbrev\SAT{\cfont{SAT}}
\newmathabbrev\SPP{\cfont{SPP}}
\newmathabbrev\GapP{\cfont{GapP}}
\newmathabbrev\BQP{\cfont{BQP}}
\newmathabbrev\ZQEXP{\cfont{ZQEXP}}
\newmathabbrev\QP{\cfont{QP}}
\newmathabbrev\StoqMA{\cfont{StoqMA}}
\newmathabbrev\coNP{\cfont{coNP}}
\newmathabbrev\AzPP{\cfont{A_0PP}}
\newmathabbrev\QMA{\cfont{QMA}}
\newmathabbrev\QXC{\cfont{QXC}}
\newmathabbrev\QMAone{\cfont{QMA}_1}
\newmathabbrev\uQMA{\cfont{uniqueQMA}}
\newmathabbrev\uSAT{\cfont{uniqueSAT}}
\newmathabbrev\cQMA{\cfont{cloneableQMA}}
\newmathabbrev\coQMA{\cfont{coQMA}}
\newmathabbrev\BPP{\cfont{BPP}}
\newmathabbrev\QCMA{\cfont{QCMA}}
\newmathabbrev\pNPlog{\p^{\NP[\log]}}
\newmathabbrev\pNP{\p^{\NP}}
\newmathabbrev\pNPtwo{\p^{\NP[2]}}
\newmathabbrev\pNPone{\p^{\NP[1]}}
\newmathabbrev\pParSAT{\p^{||\SAT}}
\newmathabbrev\pQMApar{\p^{||\QMA}}
\newmathabbrev\pCpar{\p^{||\C}}
\newmathabbrev\pStoqMApar{\p^{||\StoqMA}}
\newmathabbrev\pQMAlog{\p^{\QMA[\log]}}
\newmathabbrev\pClog{\p^{\textup{C}[\log]}}
\newmathabbrev\pC{\p^{\textup{C}}}
\newmathabbrev\QMASPACE{\cfont{QMASPACE}}
\newmathabbrev\pQMAtlog{\p^{\QMA(2)[\log]}}
\newmathabbrev\pStoqMAlog{\p^{\StoqMA[\log]}}
\newmathabbrev\pQMApt{\p^{\Vert\QMA(2)}}
\newmathabbrev\pQMA{\p^{\QMA}}
\newmathabbrev\SharpP{\cfont{\#P}}
\newmathabbrev\pSharP{\p^{\SharpP[1]}}
\newmathabbrev\PromisePP{\cfont{PromisePP}}
\newmathabbrev\lett{\le_\mathrm{tt}}
\newmathabbrev\YES{\mathsf{YES}}
\newmathabbrev\NO{\mathsf{NO}}
\newmathabbrev\PSPACE{\cfont{PSPACE}}
\newmathabbrev\IP{\cfont{IP}}
\newmathabbrev\POLY{\cfont{POLY}}
\newmathabbrev\DAG{\cfont{DAG}}
\newmathabbrev\StoqMADAG{\StoqMA\mhyphen\cfont{DAG}}
\newmathabbrev\CDAG{C\mhyphen\cfont{DAG}}
\newmathabbrev\CDAGf{C\mhyphen\cfont{DAG}_f}
\newmathabbrev\CDAGs{C\mhyphen\cfont{DAG}_s}
\newmathabbrev\CDAGd{C\mhyphen\cfont{DAG}_{d}}
\newmathabbrev\CDAGo{C\mhyphen\cfont{DAG}_1}
\newmathabbrev\LOGS{\cfont{LOGS}}
\newmathabbrev\TAUT{\cfont{TAUTOLOGY}}
\newmathabbrev\SBQP{\cfont{SBQP}}
\newmathabbrev\Fc{F_\coNP}
\newmathabbrev\Fa{F_\AzPP}
\newmathabbrev\GSCON{\cfont{GSCON}}
\newmathabbrev\GSCONexp{\GSCON_\cfont{exp}}
\newmathabbrev\QMAexp{\QMA_\cfont{exp}}
\newmathabbrev\UQMA{\cfont{UQMA}}
\newmathabbrev\R{\mathbb R}
\newmathabbrev\Trees{\cfont{TREES}}
\newmathabbrev\apxsim{\cfont{APX\mhyphen{}SIM}}
\newmathabbrev\AWPP{\cfont{AWPP}}
\newmathabbrev\X{\mathcal{X}}
\newmathabbrev\Y{\mathcal{Y}}
\newmathabbrev\Z{\mathcal{Z}}
\newmathabbrev\ZZ{\mathbb{Z}}
\newmathabbrev\Hprop{H_\mathrm{prop}}
\newmathabbrev\Hin{H_\mathrm{in}}
\newmathabbrev\Hout{H_\mathrm{out}}
\newmathabbrev\Hstab{H_\mathrm{stab}}
\newmathabbrev\Lext{\L_\mathrm{ext}}
\newmathabbrev\BTWNP{\cfont{BTW}(\NP)}
\newmathabbrev\BSN{\cfont{BSN}}
\newmathabbrev\SN{\cfont{SN}}
\newmathabbrev\BD{\cfont{BD}}
\newmathabbrev\HYPERTREE{\cfont{NP\mhyphen{}HYPERTREE}}
\newmathabbrev\Hext{H_\mathrm{ext}}
\newmathabbrev\Hpropt{\tilde{H}_\mathrm{prop}}
\newmathabbrev\Hint{\tilde{H}_\mathrm{in}}
\newmathabbrev\Houtt{\tilde H_\mathrm{out}}
\newmathabbrev\EXP{\cfont{EXP}}
\newmathabbrev\A{\mathcal{A}}
\newmathabbrev\U{\mathcal{U}}
\renewcommand\L{\mathcal{L}}
\newcommand{\E}{\mathbb E}
\newmathabbrev\DAGSSAT{\DAGS(\SAT)}
\newmathabbrev\DAGS{\mathrm{DAGS}}
\newmathabbrev\DAGSNP{\DAGS(\NP)}
\newmathabbrev\AND{\cfont{AND}}
\newmathabbrev\STCONN{{S,T}\cfont{\mhyphen{}CONN}}
\newmathabbrev\CNF{\cfont{CNF}}
\newmathabbrev\NEXP{\cfont{NEXP}}
\newmathabbrev\NPSPACE{\cfont{NPSPACE}}
\newmathabbrev\QCMASPACE{\cfont{QCMASPACE}}
\newmathabbrev\BQPSPACE{\cfont{BQPSPACE}}
\newmathabbrev\PQPSPACE{\cfont{PQPSPACE}}
\newmathabbrev\PQP{\cfont{PQP}}
\newmathabbrev\TQBF{\cfont{TQBF}}
\newmathabbrev{\PCP}{\cfont{PCP}}
\newmathabbrev\BQUPSPACE{\cfont{BQ_UPSPACE}}
\newmathabbrev\QMAt{\QMA(2)}
\newmathabbrev\QMAtSEP{\QMA^{\mathsf{SEP}}(2)}
\newmathabbrev\QMAtexp{\QMAt_{\exp}}
\newmathabbrev\MIP{\cfont{MIP}}
\newmathabbrev\MIPt{\MIP(2)}
\newmathabbrev\BellQMA{\cfont{BellQMA}}
\newmathabbrev\BellQMAt{\BellQMA(2)}
\newmathabbrev\BellQMAtexp{\BellQMAt_{\exp}}
\protected\def\verythinspace{%
  \ifmmode
    \mskip0.5\thinmuskip
  \else
    \ifhmode
      \kern0.08334em
    \fi
  \fi
}
\newcommand{\C}{\mathbb C}
\newcommand{\be}{\begin{equation}}
\newcommand{\ee}{\end{equation}}
\renewcommand{\epsilon}{\varepsilon}
\newcommand{\set}[1]{{\left\{#1\right\}}}    % braces for set notation
\DeclarePairedDelimiter\bra{\langle}{\rvert}
\DeclarePairedDelimiter\ket{\lvert}{\rangle}
\DeclarePairedDelimiterX\braket[2]{\langle}{\rangle}{#1 \delimsize\vert #2}
\DeclarePairedDelimiterX\ketbra[2]{\lvert}{\rvert}{#1 \delimsize\rangle\delimsize\langle #2}
\setlist[itemize]{noitemsep, topsep=0pt}
\setlist[enumerate]{noitemsep, topsep=0pt}
\declaretheorem[numberwithin=section]{theorem}
\declaretheorem[sibling=theorem]{lemma}
\declaretheorem[sibling=theorem]{proposition}
\declaretheorem[sibling=theorem,style=definition]{definition}
\crefname{observation}{observation}{observations}
\Crefname{observation}{Observation}{Observations}
\newcommand{\subalign}[1]{%
  \vcenter{%
    \Let@ \restore@math@cr \default@tag
    \baselineskip\fontdimen10 \scriptfont\tw@
    \advance\baselineskip\fontdimen12 \scriptfont\tw@
    \lineskip\thr@@\fontdimen8 \scriptfont\thr@@
    \lineskiplimit\lineskip
    \ialign{\hfil$\m@th\scriptstyle##$&$\m@th\scriptstyle{}##$\hfil\crcr
      #1\crcr
    }%
  }%
}
\NewDocumentCommand{\LeftComment}{s m}{%
  \Statex \IfBooleanF{#1}{\hspace*{\ALG@thistlm}}\(\triangleright\) #2}
\def\moverlay{\mathpalette\mov@rlay}
\def\mov@rlay#1#2{\leavevmode\vtop{%
   \baselineskip\z@skip \lineskiplimit-\maxdimen
   \ialign{\hfil$\m@th#1##$\hfil\cr#2\crcr}}}
\newcommand{\charfusion}[3][\mathord]{
    #1{\ifx#1\mathop\vphantom{#2}\fi
        \mathpalette\mov@rlay{#2\cr#3}
      }
    \ifx#1\mathop\expandafter\displaylimits\fi}
\algnewcommand{\LineComment}[1]{\State \(\triangleright\) #1}
\newcommand{\QCSigma}{\mathsf{QC\Sigma}}
\newcommand{\QCPi}{\mathsf{QC\Pi}}
\newcommand{\polyQCSigma}{\mathsf{polyQC\Sigma}}
\newcommand{\polyQCPi}{\mathsf{polyQC\Pi}}
\def\({\left(}
\def\){\right)}
\def\X{\mathcal{X}}
\def\Y{\mathcal{Y}}
\def\Z{\mathcal{Z}}
\def\yes{\text{yes}}
\def\no{\text{no}}
\title{Non-Standard Oracles for Bounded-Error Complexity Classes}
\date{}
\author{Avantika Agarwal\\
    \small Institute for Quantum Computing\\
    \small University of Waterloo\\
    \small \texttt{avantika.agarwal@uwaterloo.ca}
    \and Srijita Kundu\\
    \small Quantum Computing Research Centre\\
    \small Hon Hai (Foxconn) Research Institute\\
    \small \texttt{srijita.kundu@foxconn.com.sg}}
\begin{document}

\maketitle

\begin{abstract}
In recent years, the quantum oracle model introduced by Aaronson and Kuperberg (2007) has found a lot of use in showing oracle separations between complexity classes and cryptographic primitives. It is generally assumed that proof techniques that do not relativize with respect to quantum oracles will also not relativize with respect to classical oracles. Aaronson (2009) showed that this is not the case by showing a complexity class containment that relativizes with respect to classical oracles but not with quantum oracles. However, their result only works for zero-error quantum complexity classes and they leave open the problem for bounded-error complexity classes. We show that there is a quantum oracle problem that is contained in the class $\QMA$, but not in a class we call $\polyQCPH$. However, with respect to classical oracles, $\QMA$ is contained in $\polyQCPH$, because $\polyQCPH$ is equal to $\PSPACE$ with respect to classical oracles. Our result works for $\polyQCPH$, which is a bounded-error complexity class, thus it resolves the open problem from Aaronson (2009).

We also show that the same separation holds relative to a distributional oracle, which is a model introduced by Natarajan and Nirkhe (2024). We believe our findings show the need for some caution when using these non-standard oracle models, particularly when showing separations between quantum and classical resources.
\end{abstract}

\section{Introduction}
In complexity theory, we try to compare the power of different computational resources. For example, we want to understand whether quantum resources are any more powerful than classical resources. Often it is beyond the reach of current techniques to show such results unconditionally, so we try to show them in the oracle model. Oracle separations in the classical oracle model have been very well-studied for many interesting complexity classes \cite{BGS75, Has86, HRST17, BBBV97, RT19}. One of the motivations for showing oracle separations between complexity classes is the following: if two classes can be separated in the relativized setting (with respect to oracles), then a relativizing proof technique (a technique insensitive to the presence of oracles) certainly cannot prove they are equal. This is known as the ``relativization barrier''. In our work, we discuss the problem of relativization with respect to quantum and distributional oracles, specifically with the lens of bounded-error quantum complexity classes.

With the goal of constructing an oracle separation between the complexity classes $\QMA$ and $\QCMA$, \cite{AK07} introduce the model of quantum oracles and show a quantum oracle separation between $\QMA$ and $\QCMA$. Following their work, there have been multiple results in quantum complexity theory and quantum cryptography which show oracle separations with respect to quantum oracles. For example, \cite{Aar09} show a quantum oracle separation between $\QMA$ and $\QMAone$. Separations between $\QMA$ and $\QCMA$ with slightly different quantum oracles are also shown in \cite{FK18, BFM23}. Recent works have tried to study whether $\QMA$ satisfies properties similar to $\NP$, and have shown that this is not true for many natural properties of $\NP$ in the quantum oracle model. For example, \cite{INNRY22} show that relative to a quantum oracle, $\QMA$-search does not reduce to $\QMA$-decision. However we know for $\NP$ that if $\SAT$ can be solved efficiently then a satisfying assignment can also be found efficiently \cite{AB09}. More recently, \cite{AHHN24} show a quantum oracle separation between the complexity classes $\QMA$ and $\BQP^{\uQMA}$. However for $\NP$ we know that if $\uSAT$ can be solved efficiently, then there is an efficient (randomized) algorithm for solving $\SAT$ \cite{VV85} (similarly for $\MA$ and $\QCMA$ \cite{ABOBS22}). \cite{AHHN24} also show a quantum oracle separation between $\QXC$ (quantum approximate counting) and $\QMA^{\QMA}$, even though classical approximate counting is in $\BPP^{\NP}$ \cite{Sto83}. Another line of work studied the differences between quantum and classical proofs specifically with the lens of no-cloning theorem for quantum states. In particular, \cite{NZ24, CKP24} study the cloneability of quantum proof states and they show that relative to quantum oracles, all of $\QMA$, $\cQMA$ (where the proof states are cloneable) and $\QCMA$ are distinct.

In quantum cryptography, \cite{Kre21} constructs a quantum oracle relative to which $\BQP = \QMA$ but cryptographic pseudorandom quantum states exists. See \cite{BCN25, CCS25, BMMMY25, CM24, GMMY24, CKMSTY25, GZ25, CGG24} for some other quantum oracle separations in quantum cryptography. \cite{GZ25} show conditions under which cryptographic separations in the Common Haar Random State Model can be lifted to quantum unitary oracle separations \cite{CCS25, AGL24a, AGL24b}. Many works in quantum cryptography have established quantum oracle separations even for primitives with classical inputs and outputs, in which case it is much more natural to establish black-box separations with quantum superposition access to a classical oracle.\footnote{In particular, such primitives are unlike the setting of Section 5.3 of \cite{CCS25}, since there is no well-defined notion of fully black-box constructions of primitive $\mathcal{Q}$ from unitary access to primitive $\mathcal{P}$ (and its inverse), as $\mathcal{P}$ measures its registers to produce the classical output.}

On the other hand, in order to make progress towards a classical oracle separation between $\QMA$ and $\QCMA$, \cite{NN24} introduced the distributional oracle model and showed a separation in this model. Later, \cite{LLPY24} construct a different distributional oracle separating $\QMA$ and $\QCMA$.

Since a classical oracle is a special case of quantum unitary oracles (diagonal unitaries with $\pm1$ entries), a quantum oracle separation is formally weaker than a classical oracle separation. So it is natural to wonder whether all the many quantum oracle separations mentioned previously establish a sufficiently strong relativization barrier. Indeed, in \cite{AK07}, the authors state the following: ``Currently, we do not know of any quantumly non-relativizing techniques that are not also classically non-relativizing.'' In simpler words, it means that for all known techniques, they are either sensitive to the presence of both classical and quantum oracles, or insensitive to the presence of both.

This claim was refuted by \cite{Aar09} to show that there is a technique not satisfying this: there is a complexity class containment which does not relativize with quantum oracles, but does relativize with a classical oracle. However, their result works only for a zero-error quantum complexity class called $\ZQEXP$ (zero-error quantum exponential time). This is not as natural a quantum complexity class to consider since quantum algorithms inherently have probabilistic outputs which can generally cause an error with some probability. In fact, \cite{Aar09} explicitly state it as an open problem: to show such a result for bounded-error quantum complexity classes.
%We resolve this open problem to show that there is a bounded-error quantum complexity class containment that does not relativize with a quantum oracle, but does relativize with classical oracles. We also show that similar results hold for the distributional oracle setting, which was not known previously to the best of our knowledge, even with zero-error complexity classes.

Finally, \cite{CKMSTY25} show some quantum oracle separations in cryptography, and their result holds even when the adversary is allowed access to the conjugate, transpose and inverse of the unitary oracle. They then state that: ``We are unaware of meaningful techniques that evade such a separation but do not evade a black-box separation with a classical oracle.'' It turns out that our result holds even when the $\polyQCPH$ verifier is given access to the conjugate, transpose and inverse of the unitary oracle. It is however unclear whether this is the case for the result of \cite{Aar09}.

\subsection{Our observations}\label{sec:obs}
\paragraph{Quantum oracles.} We show that there is a quantum oracle relative to which $\QMA$ is not contained in a class we call $\polyQCPH$. The class $\QCPH$ (introduced in \cite{GSSSY22}) informally consists of decision problems that can be solved by an efficient quantum verifier with constantly many alternating classical proofs. $\polyQCPH$ is defined similar to $\QCPH$, except the number of alternating classical proofs is allowed to grow polynomially with the length of the input. In the unrelativized world, and with respect to classical oracles, $\polyQCPH$ is equal to $\PSPACE$ (this is due to the $\PSPACE$-completeness of the $\TQBF$ problem), and it is a well-known result that $\QMA$ is contained in $\PSPACE$. Therefore, the containment $\QMA \subseteq \polyQCPH$, for bounded-error complexity classes, does not relativize quantumly, but does relativize classically. This resolves the open problem in \cite{Aar09}.
\begin{theorem}
    The following holds for the quantum complexity classes $\QMA$ and $\polyQCPH$:
    \begin{enumerate}
        \item $\QMA \subseteq \polyQCPH$
        \item For all classical oracles $O$, it holds that $\QMA^O \subseteq \polyQCPH^O$
        \item There exists a quantum unitary oracle $U$ such that $\QMA^U \not \subseteq \polyQCPH^U$
    \end{enumerate}
\end{theorem}
The first two parts of the claim hold because $\polyQCPH$ is equal to $\PSPACE$ with respect to classical oracles and it is a well-known result that $\QMA$ is contained in $\PSPACE$ \cite{MW05}. However, $\PSPACE$ has no interpretation in the quantum oracle model since it is a classical complexity class. Our main contribution is to introduce the quantum class $\polyQCPH$, which does have a meaningful interpretation in the quantum oracle model. Moreover, unlike $\PSPACE$ which is a zero-error class, $\polyQCPH$ is a bounded-error quantum complexity class and it also generalizes $\QCMA$. With this interpretation of $\PSPACE$ in the quantum oracle model, we can show our main result. The underlying technique that is not quantumly-relativizing is the same as that of \cite{Aar09}, that is, there is no longer a way to simulate a quantum algorithm by keeping track of its amplitudes. Our contribution lies in identifying the right bounded-error complexity class to establish this result.

\noindent It does not make sense to talk about $\PSPACE$ relative to quantum oracles, because $\PSPACE$ machines cannot access arbitrary quantum oracles. Instead one could talk about the class $\BQPSPACE$, which captures polynomial-space quantum computation with bounded error. In the unrelativized world, $\BQPSPACE$ is equal to $\PSPACE$. With respect to quantum oracles, the containment $\polyQCPH$ $\subseteq \BQPSPACE$ still holds, and so does $\QMA \subseteq \BQPSPACE$, so  we must have $\BQPSPACE \not\subseteq \polyQCPH$ in this setting. Therefore, while $\BQPSPACE$ is also an analogue of $\PSPACE$ in the quantum oracle model, it is not the right choice for our result. One should also note that $\BQPSPACE$ may not be contained in $\polyQCPH$ with respect to classical oracles either; $\PSPACE \subseteq \polyQCPH$ holds for classical oracles, but $\BQPSPACE = \PSPACE$ seems not to.

\noindent The quantum oracle with which we show $\QMA \not\subseteq \polyQCPH$ is the same one used in \cite{AK07} to separate $\QMA$ and $\QCMA$. We think this demonstrates that this oracle is powerful when separating quantum and classical resources in particular, and quantum oracles in general must be used with caution when attempting such separations.

\paragraph{Distributional oracles.} We show that there is a distributional oracle relative to which $\QMA$ is not contained in $\polyQCPH$. In particular, for classical oracles, $\PSPACE = \polyQCPH$ but this is not the case for distributional oracles. This demonstrates a bounded-error complexity class containment that holds for classical oracles but not distributional oracles; to the best of our knowledge, no such result was previously known even for zero-error classes.
\begin{theorem}
    There exists a distributional oracle $D$ such that $\QMA^D \not \subseteq \polyQCPH^D$.
\end{theorem}
The distributional oracle with which we show $\QMA \not\subseteq \polyQCPH$ is the same one used in \cite{LLPY24} to separate $\QMA$ and $\QCMA$. We think therefore that distributional oracles must also be used with caution when attempting separations.

\noindent Our findings are summarized in Figure~\ref{fig:containments}.

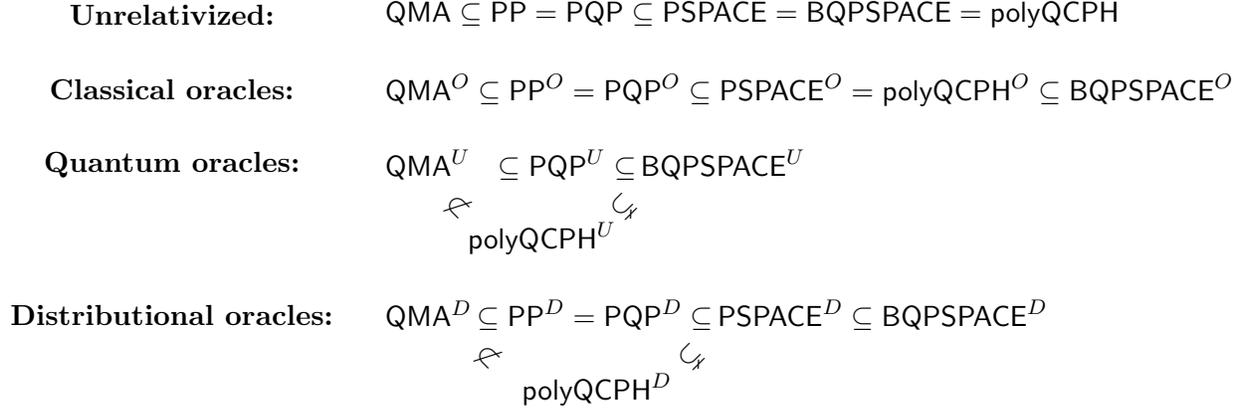
\begin{figure}[!ht]
\begin{tikzpicture}[every node/.style={align=left}]

% Unrelativized
\node at (0, 0) {\textbf{Unrelativized:}};
\node[right=2.7cm of {(0, 0)}] {$\QMA \subseteq \PP = \PQP \subseteq \PSPACE = \BQPSPACE = \polyQCPH$};

% w.r.t classical oracles
\node at (0,-1) {\textbf{Classical oracles:}};
\node[right=2.7cm of {(0, -1)}] {$ \QMA^O \subseteq \PP^O = \PQP^O \subseteq \PSPACE^O = \polyQCPH^O \subseteq \BQPSPACE^O$};

% w.r.t quantum oracles
\node at (0,-2) {\textbf{Quantum oracles:}};
\node[right=2.7cm of {(0,-2)}] (QMAU) {$ \QMA^U $};
\node [right=0.1cm of QMAU] {$\subseteq \PQP^U \subseteq$};
\node[right=2cm of QMAU] (BQPSPACEU) {$ \BQPSPACE^U $};
\node[below right=0.3cm and -0.3cm of QMAU] (polyQCPH) {$\polyQCPH^U $};
\node[rotate=-45, text=red] (nsubset) at (3.8,-2.55) {$\not\subset$};
\node[rotate=45] (subset) at (6,-2.55) {$\subsetneq$};

% w.r.t distributional oracle
\node at (0,-4) {\textbf{Distributional oracles:}};
\node[right=2.7cm of {(0,-4)}] (QMAD) {$ \QMA^D $};
\node [right=-0.2cm of QMAD] {$\subseteq \PP^D = \PQP^D \subseteq$};
\node[right=3cm of QMAD] (BQPSPACED) {$ \PSPACE^D \subseteq \BQPSPACE^D$};
\node[below right=0.3cm and 0.4cm of QMAD] (polyQCPH) {$\polyQCPH^D $};
\node[rotate=-45, text=red] at (4.2,-4.55) {$\not\subset$};
\node[rotate=45] at (6.9,-4.55) {$\subsetneq$};

\end{tikzpicture}
\caption{Complexity class containments in the unrelativized setting, and with respect to classical, quantum and distributional oracles. All of the containments shown here were known previously in the unrelativized setting \cite{MW05, Wat09, Wat03, AB09}; we verify that most of the containments hold in the quantum and distributional oracle settings, \emph{except} $\QMA \subseteq \polyQCPH$, for which we show a separation, highlighted in \textcolor{red}{red}.}
\label{fig:containments}
\end{figure}

\section{Definitions and Preliminaries}

\subsection{Complexity Classes}\label{sec:defs}

\begin{definition}[True Quantified Boolean Formula ($\TQBF$)]
    The language $\TQBF$ is defined as the set of quantified Boolean formula which are true, where a quantified Boolean formula $\psi$ is the following:
    \begin{align*}
        \psi = Q_1x_1Q_2x_2\ldots Q_nx_n \phi(x_1,x_2,\ldots x_n)
    \end{align*}
    where $Q_i$ are the quantifiers $\exists$ or $\forall$, and $\phi$ is a Boolean formula of size $m$. The size of $\psi$ is then defined to be $n+m$.
\end{definition}

\begin{definition}[Quantum Turing Machine \cite{Wat09}]\label{def:qtm}
    A quantum Turing machine (QTM) consists of three tapes: a read-only input tape, a classical work tape and a quantum work tape with qubits initialized to the $\ket{0}$ state. It has five tape heads, one each for the input and classical work tape, and three for the quantum tape. In every step of the computation, the Turing machine can read an input bit or perform one step on the classical tape, or apply one quantum gate (from Toffoli, Hadamard or phase-shift gates) on the quantum tape or perform a single-qubit measurement on the quantum tape.\\
    The runtime of the QTM is the number of steps taken by the machine, and the space used by the QTM is the number of squares used on the classical work tape plus the number of qubits used on the quantum work tape.
\end{definition}

\begin{definition}[$\BQPSPACE$ \cite{Wat09}]
    A promise problem $L = (L_{yes}, L_{no})$ is in $\BQPSPACE$ if there is a quantum Turing machine $T$ (as per Definition \ref{def:qtm}) and a polynomial $p: \N \rightarrow \N$ such that $T$ decides $L$ using $p(n)$ space on inputs of length $n$, with bounded error. That is,
    \begin{align*}
        x \in L_{yes} &\Rightarrow \Pr[T(x) = 1] \geq \frac{2}{3} \\
        x \in L_{no} &\Rightarrow \Pr[T(x) = 1] \leq \frac{1}{3}
    \end{align*}
\end{definition}
\noindent The classical analogue of $\BQPSPACE$ is $\PSPACE$ where the Turing machines described above do not have a quantum tape. It follows from definition that $\PSPACE^O \subseteq \BQPSPACE^O$ for every classical and distributional oracle. It is known that $\PSPACE = \BQPSPACE$ (shown in \cite{Wat03}). However, the proof of this result might not relativize for classical oracles according to our understanding.

\begin{definition}[$\PQP$]
    A promise problem $L = (L_{yes}, L_{no})$ is in $\PQP$ if there is a polynomial time uniform family of quantum circuits $\{V_n\}_{n \in \N}$ and polynomials $p,q: \N \rightarrow \N$ such that $V_n$ decides inputs of length $n$ using $p(n)$ time, with probability of error at most $1/2$. That is,
    \begin{align*}
        x \in L_{yes} &\Rightarrow \Pr[V_n(x) = 1] \geq \frac{1}{2} + \frac{1}{2^{q(n)}} \\
        x \in L_{no} &\Rightarrow \Pr[V_n(x) = 1] \leq \frac{1}{2}
    \end{align*}
\end{definition}
\noindent It is known (shown in \cite{Wat09}) that $\PQP = \PP$, where $\PP$ is the class of problems which can be solved by a classical polynomial time randomized algorithm with probability of error at most $1/2$. It can be verified that this result relativizes with respect to classical and distributional oracles.

\begin{definition}[$\QMA$]
    A promise problem $L = (L_{yes}, L_{no})$ is in $\QMA$ if there exists a poly-bounded function $p:\mathbb{N}\mapsto\mathbb{N}$ and a poly-time uniform family of quantum circuits $\{V_n\}_{n \in \mathbb{N}}$ such that for every $n$-bit input $x$, $V_n$ takes in a quantum proof $\ket{\psi}$ of $p(n)$ qubits and outputs a single bit, such that:
    \begin{align*}
        x \in L_{yes} &\Rightarrow \exists \ket{\psi}~~~~~\Pr[V_n(x, \ket{\psi}) = 1] \geq \frac{2}{3} \\
        x \in L_{no} &\Rightarrow \forall \ket{\psi}~~~~~\Pr[V_n(x,\ket{\psi}) = 1] \leq \frac{1}{3}
    \end{align*}
\end{definition}

\begin{definition}[$\QCSigma_i$]\label{def:QCSigmam}
  Let $L=(L_{yes},L_{no})$ be a promise problem. We say that $L$ is in $\QCSigma_i$ if there exists a poly-bounded function $p:\mathbb{N}\mapsto\mathbb{N}$ and a poly-time uniform family of quantum circuits $\{V_n\}_{n \in \mathbb{N}}$ such that for every $n$-bit input $x$, $V_n$ takes in classical proofs ${y_1}\in \set{0,1}^{p(n)}, \ldots, {y_i}\in \set{0,1}^{p(n)}$ and outputs a single bit, such that:
  \begin{itemize}
    \item Completeness: $x\in L_{yes}$ $\Rightarrow$ $\exists y_1 \forall y_2 \ldots Q_i y_i$ s.t. $\operatorname{Prob}[V_n \text{ accepts } (y_1, \ldots, y_i)] \geq \frac{2}{3}$.
    \item Soundness: $x\in L_{no}$ $\Rightarrow$ $\forall y_1 \exists y_2 \ldots \overline{Q}_i y_i$ s.t. $\operatorname{Prob}[V_n \text{ accepts } (y_1, \ldots, y_i)] \leq \frac{1}{3}$.
  \end{itemize}
  Here, $Q_i$ equals $\exists$ when $i$ is odd and equals $\forall$ otherwise and $\overline{Q}_i$ is the complementary quantifier to $Q_i$.
\end{definition}

\noindent Note that the first level of this hierarchy corresponds to $\QCMA$. The complement of the $i^{\mathrm{th}}$ level of the hierarchy, $\QCSigma_i$, is the class $\QCPi_i$ defined next.

\begin{definition}[$\QCPi_i$]\label{def:QCPim}
  Let $L=(L_{yes},L_{no})$ be a promise problem. We say that $L \in \QCPi_i$ if there exists a polynomially bounded function $p:\mathbb{N}\mapsto\mathbb{N}$ and a poly-time uniform family of quantum circuits $\{V_n\}_{n \in \mathbb{N}}$ such that for every $n$-bit input $x$, $V_n$ takes in classical proofs ${y_1}\in \set{0,1}^{p(n)}, \ldots, {y_i}\in \set{0,1}^{p(n)}$ and outputs a single bit, such that:
  \begin{itemize}
    \item Completeness: $x\in L_{yes}$ $\Rightarrow$ $\forall y_1 \exists y_2 \ldots Q_i y_i$ s.t. $\operatorname{Prob}[V_n \text{ accepts } (y_1, \ldots, y_i)] \geq \frac{2}{3}$.
    \item Soundness: $x\in L_{no}$ $\Rightarrow$ $\exists y_1 \forall y_2 \ldots \overline{Q}_i y_i$ s.t. s.t. $\operatorname{Prob}[V_n \text{ accepts } (y_1, \ldots, y_i)] \leq \frac{1}{3}$.
  \end{itemize}
  Here, $Q_i$ equals $\forall$ when $i$ is odd and equals $\exists$ otherwise, and $\overline{Q}_i$ is the complementary quantifier to $Q_i$.
\end{definition}

\noindent Now the corresponding quantum-classical polynomial hierarchy is defined as follows.

\begin{definition}[Quantum-Classical Polynomial Hierarchy ($\QCPH$)]\label{def:QCPH}
  \begin{align*}
        \QCPH = \bigcup_{m \in \mathbb{N}} \; \QCSigma_i = \bigcup_{m \in \mathbb{N}} \; \QCPi_i.
    \end{align*}
\end{definition}

\noindent Note that $\QCPH$ is the quantum-classical analogue of the classical complexity class $\PH$, where the verifier is a classical deterministic algorithm. The first level of $\PH$ is the well-known complexity class $\NP$. It follows from definition that $\PH \subseteq \QCPH$ and this is true relative to all classical and distributional oracles. See \cite{GSSSY22}, \cite{AGKR24} and \cite{AB24} for other properties of $\QCPH$. Note also that $\QCPH$ ($\PH$) consist of problems which can be decided by a quantum (classical) verifier receiving a constant number of alternatively quantified proofs. In particular, the number of proofs that the verifier receives is not allowed to increase with increase in the input length (though the size of the proofs is allowed to increase). If we do allow the number of proofs to grow, say, polynomially with the length of the input, then we get a different complexity class, which we will call $\polyQCPH$. We will call the analogous classical deterministic complexity class as $\polyPH$, though we don't define it formally. Later we show that $\polyPH = \polyQCPH$ with respect to all classical oracles. We need the class $\polyQCPH$ (instead of the previously defined class $\QCPH$) because we need a complexity class that is equal to $\PSPACE$ in the unrelativized world.

\begin{definition}[$\polyQCSigma_i$]\label{def:polyQCSigmam}
  Let $L=(L_{yes},L_{no})$ be a promise problem. We say that $L$ is in $\polyQCSigma_i$ if there exists a poly-bounded function $p:\mathbb{N}\mapsto\mathbb{N}$ and a poly-time uniform family of quantum circuits $\{V_n\}_{n \in \mathbb{N}}$ such that for every $n$-bit input $x$, $V_n$ takes in at most $n^i$ classical proofs ${y_1}\in \set{0,1}^{p(n)}, \ldots, {y_{n^i}}\in \set{0,1}^{p(n)}$ and outputs a single bit, such that:
  \begin{itemize}
    \item Completeness: $x\in L_{yes}$ $\Rightarrow$ $\exists y_1 \forall y_2 \ldots Q_{n^i} y_{n^i}$ s.t. $\operatorname{Prob}[V_n \text{ accepts } (y_1, \ldots, y_{n^i})] \geq \frac{2}{3}$.
    \item Soundness: $x\in L_{no}$ $\Rightarrow$ $\forall y_1 \exists y_2 \ldots \overline{Q}_{n^i} y_{n^i}$ s.t. $\operatorname{Prob}[V_n \text{ accepts } (y_1, \ldots, y_{n^i})] \leq \frac{1}{3}$.
  \end{itemize}
  Here, $Q_{n^i}$ equals $\exists$ when ${n^i}$ is odd and equals $\forall$ otherwise and $\overline{Q}_{n^i}$ is the complementary quantifier to $Q_{n^i}$.
\end{definition}

\begin{definition}[$\polyQCPi_i$]\label{def:polyQCPim}
  Let $L=(L_{yes},L_{no})$ be a promise problem. We say that $L \in \polyQCPi_i$ if there exists a polynomially bounded function $p:\mathbb{N}\mapsto\mathbb{N}$ and a poly-time uniform family of quantum circuits $\{V_n\}_{n \in \mathbb{N}}$ such that for every $n$-bit input $x$, $V_n$ takes in at most $n^i$ classical proofs ${y_1}\in \set{0,1}^{p(n)}, \ldots, {y_{n^i}}\in \set{0,1}^{p(n)}$ and outputs a single bit, such that:
  \begin{itemize}
    \item Completeness: $x\in L_{yes}$ $\Rightarrow$ $\forall y_1 \exists y_2 \ldots Q_{n^i} y_{n^i}$ s.t. $\operatorname{Prob}[V_n \text{ accepts } (y_1, \ldots, y_{n^i})] \geq \frac{2}{3}$.
    \item Soundness: $x\in L_{no}$ $\Rightarrow$ $\exists y_1 \forall y_2 \ldots \overline{Q}_{n^i} y_{n^i}$ s.t. $\operatorname{Prob}[V_n \text{ accepts } (y_1, \ldots, y_{n^i})] \leq \frac{1}{3}$.
  \end{itemize}
  Here, $Q_{n^i}$ equals $\forall$ when ${n^i}$ is odd and equals $\exists$ otherwise, and $\overline{Q}_{n^i}$ is the complementary quantifier to $Q_{n^i}$.
\end{definition}

\begin{definition}[$\polyQCPH$]\label{def:polyQCPH}
  \begin{align*}
        \polyQCPH = \bigcup_{i \in \mathbb{N}} \; \polyQCSigma_i = \bigcup_{i \in \mathbb{N}} \; \polyQCPi_i.
    \end{align*}
\end{definition}
Note that we later make use of the classical complexity class $\polyPH$, which is defined similarly to $\polyQCPH$, except that the verifier is classical instead of quantum.
\subsection{Other Preliminaries}

\begin{definition}[$p$-uniform probability measure]
    Let $\mu$ be the uniform probability over $N$-dimensional pure states. For a given $p \in [0,1]$, a probability measure $\sigma$ is a $p$-uniform measure if it can be obtained from $\mu$ by conditioning on some event occurring with probability at least $p$.
\end{definition}

\begin{lemma}[Geometric Lemma \cite{AK07}]\label{lem:geometriclemma}
    Given a $p$-uniform probability measure $\sigma$ over $N$-dimensional pure states, and a density matrix $\rho$,
    \begin{align*}
        \E_{\ket{\psi} \sim \sigma}[\bra{\psi}\rho\ket{\psi}] = O\left(\frac{1+\log(1/p)}{N}\right)
    \end{align*}
\end{lemma}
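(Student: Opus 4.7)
The approach is to reduce the problem to the case where $\rho$ is a rank-one projector via spectral decomposition, and then carry out an explicit computation in which the worst-case conditioning event is a tail superlevel set. Write $\rho = \sum_i \lambda_i \ketbra{e_i}{e_i}$ with $\lambda_i \geq 0$ and $\sum_i \lambda_i = 1$; by linearity,
\begin{align*}
\E_{\ket{\psi} \sim \sigma}[\bra{\psi}\rho\ket{\psi}] \;=\; \sum_i \lambda_i \, \E_{\ket{\psi} \sim \sigma}\bigl[|\langle e_i|\psi\rangle|^2\bigr],
\end{align*}
so it suffices to prove $\E_\sigma[|\langle e|\psi\rangle|^2] = O((1+\log(1/p))/N)$ for each fixed unit vector $e \in \mathbb{C}^N$ and then take the convex combination.

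Fix $e$ and let $X = |\langle e|\psi\rangle|^2$. Under the uniform measure $\mu$, a standard calculation shows $X \sim \mathrm{Beta}(1, N-1)$, with density $(N-1)(1-x)^{N-2}$ on $[0,1]$ and tail $\Pr_\mu(X \geq t) = (1-t)^{N-1}$. Writing $\sigma = \mu(\,\cdot \mid E)$ with $\mu(E) \geq p$, we have $\E_\sigma[X] = \E_\mu[X\,\mathbf{1}_E]/\mu(E)$. The key rearrangement step is that this ratio is maximized when $E$ is the superlevel set $\{X \geq t_p\}$ with $\mu(E) = p$ exactly: letting $s_q$ denote the threshold with $\Pr_\mu(X \geq s_q) = q$, the function $g(q) := \E_\mu[X\,\mathbf{1}_{X \geq s_q}]/q$ has derivative $g'(q) = (s_q\,q - \E_\mu[X\,\mathbf{1}_{X \geq s_q}])/q^2$, which is non-positive because $X \geq s_q$ throughout the event in the numerator. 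Hence $g$ is non-increasing in $q$ and the worst case over $q \geq p$ occurs at $q = p$.

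It then remains to evaluate the worst-case conditional expectation. Integration by parts on $\int_{t_p}^1 x\,(N-1)(1-x)^{N-2}\,dx$ yields $\E_\mu[X\,\mathbf{1}_{X \geq t_p}] = t_p\,p + p(1-t_p)/N$, so $\E_\sigma[X] = t_p + (1-t_p)/N$. From $(1-t_p)^{N-1} = p$ and the elementary inequality $1 - e^{-x} \leq x$, we get $t_p = 1 - e^{-\log(1/p)/(N-1)} \leq \log(1/p)/(N-1)$, which settles the bound in the main regime $\log(1/p) \leq N-1$; in the opposite regime $\log(1/p) > N-1$ the trivial bound $t_p \leq 1$ already suffices, since the target expression $(1+\log(1/p))/N$ exceeds $1$ there. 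The only mildly subtle ingredient is the rearrangement argument identifying superlevel sets as extremal events; the remaining steps are routine distribution-theoretic computations for the Haar measure.
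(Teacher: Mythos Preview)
The paper does not supply its own proof of this lemma; it is quoted verbatim from \cite{AK07} as a preliminary, so there is no in-paper argument to compare against. Your proof is correct and is in fact close in spirit to the original argument in \cite{AK07}: reduce to a rank-one $\rho$ by spectral decomposition, use that $|\langle e|\psi\rangle|^2$ is $\mathrm{Beta}(1,N-1)$ under the Haar measure, identify the worst-case conditioning event as a superlevel set, and then bound the resulting conditional mean. Your version is slightly cleaner than the original in that you carry out the exact integration by parts to get $\E_\sigma[X] = t_p + (1-t_p)/N$ and then bound $t_p$, whereas \cite{AK07} argues via the tail bound $\Pr_\mu(X \ge t) = (1-t)^{N-1}$ and a more coarse-grained estimate; both routes land on the same $O((1+\log(1/p))/N)$.

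One small point worth tightening: in the rearrangement step you optimize over superlevel sets of varying mass $q \ge p$ and show $g(q)$ is non-increasing, but you should also state explicitly (it is one line) that among \emph{all} events of a fixed mass $q$, the superlevel set maximizes $\E_\mu[X\,\mathbf{1}_E]$; this is the Hardy--Littlewood rearrangement inequality and is what lets you restrict attention to superlevel sets in the first place. With that sentence added, the argument is complete.
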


\begin{proposition}[Chernoff Bound]\label{prop:chernoff}
    Let $\mathbf{X_1}, \cdots, \mathbf{X_n}$ be $n$ independent random variables between $0$ and $1$. If $\mathbf{X} = \sum_{i=1}^n \mathbf{X_i}$ and $\mu = \E[\mathbf{X}]$, then for any $\delta > 0$
    \begin{align*}
        \Pr[\mathbf{X} \geq \mu + t] &\leq e^{-\frac{2t^2}{n}} \\
        \Pr[\mathbf{X} \leq \mu - t] &\leq e^{-\frac{2t^2}{n}}
    \end{align*}
\end{proposition}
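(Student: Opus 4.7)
The plan is to establish the upper tail bound $\Pr[\mathbf{X} \geq \mu + t] \leq e^{-2t^2/n}$ via the standard moment-generating-function (MGF) approach of Chernoff and Hoeffding; the lower tail follows by applying the same argument to the variables $1 - \mathbf{X}_i$, which also lie in $[0,1]$ and whose sum has mean $n - \mu$, so that $\mathbf{X} \leq \mu - t$ is equivalent to $\sum_i (1 - \mathbf{X}_i) \geq (n - \mu) + t$.

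First, for any $s > 0$, Markov's inequality applied to the nonnegative random variable $e^{s\mathbf{X}}$ gives
\begin{align*}
\Pr[\mathbf{X} \geq \mu + t] = \Pr[e^{s\mathbf{X}} \geq e^{s(\mu + t)}] \leq e^{-s(\mu + t)} \E[e^{s\mathbf{X}}].
\end{align*}
Independence of the $\mathbf{X}_i$'s then lets me factor the MGF as $\E[e^{s\mathbf{X}}] = \prod_{i=1}^{n} \E[e^{s\mathbf{X}_i}]$, reducing the problem to bounding each per-variable MGF.

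The key ingredient is Hoeffding's lemma: for a random variable $Y \in [a,b]$ almost surely, $\E[e^{s(Y - \E Y)}] \leq e^{s^2 (b-a)^2 / 8}$. Applying this to each $\mathbf{X}_i - \E[\mathbf{X}_i]$, which lies in an interval of length at most $1$, yields $\E[e^{s(\mathbf{X} - \mu)}] \leq e^{s^2 n / 8}$, and consequently
\begin{align*}
\Pr[\mathbf{X} \geq \mu + t] \leq e^{-st + s^2 n / 8}.
\end{align*}
Optimizing over $s > 0$ by taking $s = 4t/n$ produces the claimed bound $e^{-2t^2/n}$.

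The only non-trivial step is Hoeffding's lemma itself, whose proof uses convexity of $y \mapsto e^{sy}$ to dominate it on $[a,b]$ by its secant line, then applies a Taylor expansion of the logarithm of the resulting expression around $s = 0$ to extract the $s^2(b-a)^2/8$ factor. This is a standard textbook computation and is the main (and minor) obstacle; everything else is bookkeeping. Since the statement is a classical inequality quoted for later use, I would cite a standard reference rather than reproducing the lemma's proof in full.
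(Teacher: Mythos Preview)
Your proposal is correct and is the standard MGF/Hoeffding route to this inequality. The paper itself does not prove this proposition at all: it is stated in the preliminaries as a known fact (the Chernoff/Hoeffding bound) and simply invoked later, so there is no ``paper's own proof'' to compare against. Your suggestion to cite a standard reference rather than reproduce Hoeffding's lemma in full is exactly in line with how the paper treats it.
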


\section{Classical Oracles}
In this section, we show that $\PSPACE = \polyQCPH = \polyPH$ with respect to every classical oracle $O$. We start by describing the proof of $\TQBF$-completeness for $\PSPACE$, from \cite{AB09}.

\begin{lemma}[Theorem 4.11 of \cite{AB09}]\label{lem:tqbfpspacecomp}
    The problem $\TQBF$ is complete for $\PSPACE$. This holds relative to all classical oracles $O$.
\end{lemma}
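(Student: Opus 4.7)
The plan is to follow the standard proof that $\TQBF$ is $\PSPACE$-complete (as in \cite{AB09}) and verify that each step is preserved when a classical oracle $O$ is present. The natural relativization $\TQBF^O$ is the language of true quantified Boolean formulas whose quantifier-free matrix is a polynomial-size circuit over an extended basis that includes oracle-lookup gates evaluating $O$ on polynomial-length query strings.

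The containment $\TQBF^O \in \PSPACE^O$ follows from the usual recursive evaluator: process the outermost quantifier by substituting each of its two possible values, combine the two recursive calls with $\wedge$ or $\vee$, and reuse space between subcalls. The recursion depth equals the number of quantifiers, so the whole procedure runs in polynomial space; whenever the algorithm reaches the matrix, every oracle gate is resolved by a single query to $O$, which $\PSPACE^O$ is allowed to make.

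For $\PSPACE^O \subseteq \TQBF^O$, I would reduce a $\PSPACE^O$ machine $M$ on input $x$ to a quantified formula via a Savitch-style recursion on configurations. Writing $s(n) = \poly(n)$ for the space bound and $T = 2^{O(s(n))}$ for the time bound, encode each configuration of $M$ (state, tapes, and oracle query register) as a string of length $s(n)$, and define the predicate $\mathrm{Reach}_k(C,C')$ to say that $C$ reaches $C'$ in at most $2^k$ steps. Using the standard inductive construction
\begin{align*}
\mathrm{Reach}_{k+1}(C,C') \;\equiv\; \exists C'' \, \forall (D,D') \; \bigl[ (D,D') \in \{(C,C''),(C'',C')\} \Rightarrow \mathrm{Reach}_k(D,D') \bigr],
\end{align*}
unfolding to depth $\log T = \poly(n)$ produces a quantified formula of polynomial size whose truth is equivalent to $M$ accepting $x$, namely $\mathrm{Reach}_{\log T}(C_{\mathrm{start}}(x), C_{\mathrm{accept}})$.

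The only place the oracle genuinely enters is in the base case $\mathrm{Reach}_0(C,C')$, which expresses a single step of $M$. Such a step is local except possibly for one oracle interaction: if $M$ is in its query state in $C$ with query string $q(C)$, the returned bit is $O(q(C))$, and the transition to $C'$ depends on this bit. I would handle this by extracting $q(C)$ from $C$ with a small classical sub-circuit, feeding it to a single $O$-gate, and using the result to enforce consistency with $C'$; the matrix remains polynomial-size in the extended basis. The only real point requiring care is therefore fixing the definition of $\TQBF^O$ so that its matrix is allowed to contain $O$-gates; once this is done, the rest of the argument is word-for-word the unrelativized proof.
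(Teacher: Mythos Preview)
Your proposal is correct and is essentially the same argument as the paper's: both use the Savitch-style recursion on the configuration graph, with the identical $\exists C''\,\forall (D,D')$ trick to keep the formula polynomial size, and both observe that the oracle only enters in the one-step base predicate. Your treatment is, if anything, slightly more explicit than the paper's about how to formalize $\TQBF^O$ (oracle gates in the matrix) and about the easy direction $\TQBF^O \in \PSPACE^O$, which the paper omits.
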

\begin{proof}
    Consider $L \in \PSPACE$ and let $M$ be the corresponding Turing Machine. Let $G_{M,x}$ be the configuration graph of $M$ on input $x$. Then $M$ accepts $x$ iff there is a path from $C_{init}$ to $C_{acc}$ in the configuration graph $G_{M,x}$. Suppose $M$ uses space $m = p(|x|)$ on input $x$, where $p: \N \rightarrow \N$ is a polynomial. We will convert this problem of finding a path between $C_{init}$ and $C_{acc}$ to a quantified Boolean formula $\psi$ which is true iff such a path exists. This construction is done inductively, and we define $\psi_i(C_1, C_2)$ to be the quantified Boolean formula which is true iff there is a path of length at most $2^i$ between vertices $C_1$ and $C_2$ of $G_{M,x}$. Then we want to set $\psi = \psi_m(C_{init}, C_{acc})$. We know that $\psi_0$ can be written as a Boolean formula of size $O(m)$, using a construction similar to Cook-Levin theorem. Then we can define $\psi_i$ as follows:
    \begin{align*}
        \psi_i(C_1, C_2) := \exists C \forall D_1 \forall D_2 [(D_1 = C_1 \wedge D_2 = C) \vee (D_1 = C \wedge D_2 = C_2)] \Rightarrow \psi_{i-1}(D_1, D_2)
    \end{align*}
    The formula above encodes the fact that there is a path of length at most $2^i$ between $C_1$ and $C_2$ iff there is a vertex $C$ such that there is a path of length at most $2^{i-1}$ between $C_1$ and $C$ and between $C$ and $C_2$. Then the formula $\psi_m$ has size $O(m^2)$. We can also check that each of these steps will work in the presence of an arbitrary classical oracle, where note it is important that the quantified variables $C, D_1, D_2$ and the Boolean formulas $\psi_i$ are allowed to depend on the oracle $O$.
\end{proof}

\begin{lemma}
    For every classical oracle $O$, $\PSPACE^O = \polyPH^O = \polyQCPH^O$.
\end{lemma}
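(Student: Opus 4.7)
The plan is to prove the sandwich $\polyPH^O \subseteq \polyQCPH^O \subseteq \PSPACE^O \subseteq \polyPH^O$, which immediately yields both equalities. The first inclusion is essentially definitional: a classical deterministic verifier with polynomially many alternating proofs is a special case of a polynomial-time quantum verifier with the same proofs (one that never touches the quantum tape and accepts with probability in $\{0,1\}$), so $\polyPH^O \subseteq \polyQCPH^O$ holds relative to any classical oracle $O$.

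For $\polyQCPH^O \subseteq \PSPACE^O$, I would take an arbitrary $L \in \polyQCSigma_i^O$ (the $\polyQCPi_i^O$ case is symmetric) with verifier $\{V_n\}$ receiving $n^i$ alternating classical proofs each of length $p(n)$. A polynomial-space machine with oracle $O$ recursively iterates over each proof $y_j \in \{0,1\}^{p(n)}$, taking a max at $\exists$-levels and a min at $\forall$-levels of the resulting acceptance quantity, deciding according to the $2/3$ versus $1/3$ thresholds. The recursion depth is $n^i$ and each level stores one proof of length $p(n)$, so the total workspace is polynomial. At the leaves the machine must compute the acceptance probability of the polynomial-time quantum circuit $V_n(x, y_1, \dots, y_{n^i})$ with oracle gates for $O$; this can be done exactly in polynomial space by the standard path-integral (Feynman-sum) simulation of quantum circuits establishing $\BQP^O \subseteq \PSPACE^O$, which goes through unchanged for any classical oracle because oracle queries act as deterministic classical permutations of basis states.

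For $\PSPACE^O \subseteq \polyPH^O$, I would invoke Lemma~\ref{lem:tqbfpspacecomp}: on input $x$ for a $\PSPACE^O$ machine $M$ using space $m = p(|x|)$, the reduction produces $\psi_m(C_{init}, C_{acc})$ of size $O(m^2)$ with $O(m)$ levels of recursion. Each level introduces an $\exists C$ followed by $\forall D_1 \forall D_2$; merging the two consecutive $\forall$-quantifiers into one $\forall$-block over a $2\cdot O(m)$-bit string that the verifier parses into $(D_1, D_2)$ yields a strictly alternating $\exists \forall \exists \forall \cdots$ formula with $O(m) = \poly(|x|)$ alternations, each quantifier of polynomial length, and an innermost Boolean predicate $\psi_0$ that is polynomial-size and makes oracle queries to $O$. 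This fits the template of a $\polyQCSigma_i^O$ problem (with a purely classical deterministic poly-time verifier evaluating $\psi_0$) for any $i$ large enough that $n^i \geq $ the number of alternations, placing $L$ in $\polyPH^O$.

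The main technical care is in two places. First, verifying that the path-integral simulation of the quantum verifier at the leaves in the $\polyQCPH^O \subseteq \PSPACE^O$ step still fits in PSPACE when $V_n$ queries an arbitrary classical oracle; this is standard but worth stating, and holds because each oracle query is a deterministic classical gate that can be evaluated in place during the recursive amplitude computation. Second, the strict-alternation rewriting of $\psi_m$ in the $\PSPACE^O \subseteq \polyPH^O$ step, which is the routine grouping argument just described. Neither presents a real obstacle — the proof is essentially a relativized bookkeeping check — but the strict-alternation rewriting is the most notation-heavy piece to get right, and is the step I would write out most carefully.
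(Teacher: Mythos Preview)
Your proposal is correct and follows essentially the same sandwich argument as the paper: $\polyPH^O \subseteq \polyQCPH^O \subseteq \PSPACE^O \subseteq \polyPH^O$, with the last inclusion via the relativized $\TQBF$-completeness of Lemma~\ref{lem:tqbfpspacecomp} and the middle inclusion via iterating over proofs and simulating the quantum verifier in polynomial space (the paper points to Lemma~\ref{lem:pqpinpspacedist} for this). You spell out more detail than the paper does---in particular the strict-alternation rewriting of $\psi_m$ and the path-integral simulation at the leaves---but this is elaboration of the same proof, not a different route.
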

\begin{proof}
    It is clear from definition that $\polyPH^O \subseteq \polyQCPH^O \subseteq \PSPACE^O$ where for the second containment we use the fact that a $\PSPACE^O$ machine can iterate over all possible settings of alternatively quantified proofs and estimate the acceptance probability of the resulting verifier. In particular, we can think of the action of verifier $V^O$ on input $x$ as follows:
    \begin{align*}
        V^O(x) = V_{p(n)}OV_{p(n)-1}\ldots OV_1 \ket{x0^{q(n)}}
    \end{align*}
    where $q(n)$ is the number of ancilla qubits used by $V^O$. Then a classical Turing machine $M^O$ can compute the probability that the output on measuring $V^O(x)$ is $y$, by computing the following sum:
    \begin{align*}
        \sum_{z_1, z_2, \ldots z_{2p(n)-2}} \bra{y}V_{p(n)}\ket{z_{2p(n)-2}}\bra{z_{2p(n)-2}}O_D\ket{z_{2p(n)-3}} \ldots \bra{z_2}O_D\ket{z_1} \bra{z_1}V_1\ket{x0^{q(n)}}
    \end{align*}
    This sum can be computed in polynomial space, since $M$ only needs to store a partial sum and $2p(n)-1$ values for computing the products in each sum. Note that to compute these products, $M$ only needs to make classical queries to $O$ because $z_i$ are classical strings. From Lemma \ref{lem:tqbfpspacecomp}, we know that $\PSPACE^O \subseteq \polyPH^O$ since a $\polyPH^O$ machine can clearly decide a $\TQBF$ predicate (dependent on the oracle $O$). Hence the claim follows.
\end{proof}

\section{Quantum Oracles}

\subsection{\texorpdfstring{$\QMA$}{QMA} is contained in \texorpdfstring{$\PQP$}{PQP}}
In this section we show that relative to every quantum oracle $U$, $\QMA^U \subseteq \PQP^U$. This proof is the same as the proof by \cite{MW05}.

\begin{lemma}[Theorem 3.4 of \cite{MW05}]\label{lem:qmainpqp}
    For every quantum oracle $U$, $\QMA^U \subseteq \PQP^U$.
\end{lemma}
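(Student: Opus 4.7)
The plan is to follow the proof of \cite{MW05} that $\QMA \subseteq \PP$ (which in fact proceeds via $\QMA \subseteq \PQP$) and check that every step relativizes to an arbitrary quantum oracle $U$.

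First, I would amplify the $\QMA^U$ verifier. Given a $\QMA^U$ verifier $V^U$ for $L$ with completeness $2/3$, soundness $1/3$, and witness register of size $m = m(n)$, the in-place amplification of Marriott--Watrous produces a new verifier $\tilde V^U$ with the same witness size $m$ but completeness $1 - 2^{-r(n)}$ and soundness $2^{-r(n)}$ for any polynomial $r$ of our choice. Because that amplification treats $V^U$ and $(V^U)^\dagger$ only as black boxes, together with reflection about the $\ket{0^a}$ ancilla state and a small amount of classical post-processing, it relativizes to any quantum oracle $U$.

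Next, I would construct a polynomial-time quantum circuit $W^U$ that on input $x$ prepares the witness register in the maximally mixed state $\I/2^m$, prepares the ancillas in $\ket{0^a}$, runs $\tilde V^U$, and outputs its decision bit. Letting $M^U(x)$ denote the positive operator on the witness register corresponding to $\tilde V^U$ accepting, the acceptance probability of $W^U$ equals $\Tr(M^U(x))/2^m$. For $x \in \Lyes$ this is at least $(1 - 2^{-r(n)})/2^m$, since some eigenvalue of $M^U(x)$ is at least $1 - 2^{-r(n)}$, while for $x \in \Lno$ every eigenvalue is at most $2^{-r(n)}$ and so the acceptance probability is at most $2^m \cdot 2^{-r(n)} / 2^m = 2^{-r(n)}$. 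Choosing $r(n) = 2m(n)$ yields a gap of order $1/2^{m(n)}$ between the yes and no cases.

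Finally, to land in $\PQP^U$ I would shift the acceptance probability so that the decision threshold sits at $1/2$: with some efficiently computable probability $p$, output $1$ unconditionally, and with probability $1-p$ run $W^U$ and return its answer. Choosing $p$ so that $p + (1-p)\alpha = 1/2$ for some $\alpha$ strictly between the no-case and yes-case acceptance probabilities of $W^U$ produces an overall acceptance probability strictly greater than $1/2$ for yes instances and at most $1/2$ for no instances, with a gap of $1/2^{\poly(n)}$, matching the $\PQP$ definition. Throughout, the oracle $U$ appears only inside $V^U$ and $\tilde V^U$ as a black-box unitary, so no step uses anything beyond the ability to apply $U$ and $U^\dagger$, and the inclusion relativizes. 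The main point to be careful about is the amplification step: naive parallel repetition would blow up the witness size and destroy the $1/2^m$ separation above, which is why in-place amplification (with unchanged witness length) is essential.
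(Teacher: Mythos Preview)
Your proposal is correct and follows essentially the same approach as the paper: Marriott--Watrous in-place amplification followed by running the amplified verifier on the maximally mixed witness, observing that the resulting acceptance probabilities separate yes from no instances by an inverse-exponential gap that can be shifted to threshold $1/2$. The paper's version is terser (it sets $r(n)=p(n)^2$ and omits the explicit threshold shift you include), but the argument is the same.
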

\begin{proof}
    Let $L \in \QMA^U$ and $V^U$ be the corresponding verifier. Suppose $V^U$ takes a quantum proof of length $p(n)$ on inputs of length $n$. Using strong error-reduction for $V^U$ from \cite{MW05}, we will assume that $V^U$ makes an error of at most $1/2^{r(n)}$, where $r(n) = p^2(n)$. Define the verifier $W^U$, which runs the verifier $V^U$ by fixing the proof to be the maximally mixed state on $p(n)$ qubits. That is, on input $x$ of length $n$, $W^U(x) = V^U\left(x, \frac{I_{2^{p(n)}}}{2^{p(n)}}\right)$. Then using the assumption on $V^U$, we have the following:
    \begin{align*}
        x \in L_{yes} &\Rightarrow \Pr[W^U(x) = 1] \geq \frac{1}{2^{p(n)}}\left(1-\frac{1}{2^{r(n)}}\right) \\
        x \in L_{no} &\Rightarrow \Pr[W^U(x) = 1] \leq \frac{1}{2^{r(n)}}
    \end{align*}
    Thus $W^U$ is a $\PQP^U$ verifier for $L$ for the given choice of $r(n)$. Therefore, $L \in \PQP^U$.
\end{proof}

\begin{lemma}\label{lem:pqpinbqpspace}
    For every quantum oracle $U$, $\PQP^U \subseteq \BQPSPACE^U$.
\end{lemma}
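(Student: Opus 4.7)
The plan is to simulate a $\PQP^U$ verifier by sequential repetition and classical counting, avoiding any attempt to simulate the quantum oracle $U$ classically. Concretely, let $V^U$ be a $\PQP^U$ verifier for $L$ that uses $\poly(n)$ qubits and satisfies $\Pr[V^U(x)=1] \geq 1/2 + 1/2^{q(n)}$ on yes-instances and $\Pr[V^U(x)=1] \leq 1/2$ on no-instances, for some polynomial $q$. I would construct a $\BQPSPACE^U$ machine $M^U$ which, on input $x$, runs $V^U$ sequentially $K := 2^{3q(n)}$ times. Between runs, $M^U$ re-initializes its polynomial-size quantum workspace to $\ket{0}$; after each run it measures the designated output qubit to obtain a bit $b_i$ and increments a classical counter $c$ iff $b_i=1$. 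At the end, $M^U$ accepts iff $c/K \geq 1/2 + 1/2^{q(n)+1}$.

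For correctness I would apply the Chernoff bound (Proposition \ref{prop:chernoff}) to the $K$ independent Bernoulli variables $b_1, \ldots, b_K$ with $t = K/2^{q(n)+1}$, giving failure probability on each side at most $e^{-K/2^{2q(n)+1}}$, which with $K = 2^{3q(n)}$ is doubly exponentially small in $q(n)$. Thus on yes-instances the empirical average $c/K$ exceeds the threshold $1/2 + 1/2^{q(n)+1}$ with probability at least $2/3$, and on no-instances it stays below this threshold with probability at least $2/3$. For the space analysis, each run of $V^U$ uses $\poly(n)$ total space which can be reused across repetitions; the counter $c$ occupies $\lceil \log_2 K \rceil = O(q(n))$ classical bits; and each oracle query to $U$ acts on at most $\poly(n)$ qubits already present in the workspace. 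The total space is $\poly(n)$ while the total running time is $K \cdot \poly(n) = 2^{\poly(n)}$, which is permissible for a $\BQPSPACE^U$ machine.

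The main obstacle I anticipate is conceptual rather than technical: the analog $\PQP^O \subseteq \PSPACE^O$ for classical oracles (and the unrelativized statement $\PQP \subseteq \PSPACE$) is typically proved by summing amplitudes along computational paths in polynomial classical space, a strategy that is unavailable here since $U$ is a black-box quantum unitary rather than an entrywise accessible matrix. The repeat-and-count approach circumvents this by keeping the simulation quantum and treating $U$ as an opaque gate, but it requires verifying that sequentially measuring the output of $V^U$ and then re-initializing the quantum workspace actually produces $K$ i.i.d.\ Bernoulli samples from $\Pr[V^U(x)=1]$. This follows from standard reasoning about intermediate measurement, together with the fact that the post-measurement state is discarded before each reset, so no coherence is carried across runs.
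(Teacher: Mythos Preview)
Your proposal is correct and follows essentially the same approach as the paper: repeat the $\PQP^U$ verifier $2^{\Theta(q(n))}$ times sequentially, reuse the polynomial-size quantum workspace between runs, maintain an $O(q(n))$-bit classical counter, and decide by comparing the empirical acceptance fraction to a threshold midway in the gap, with correctness via the Chernoff bound. The only differences are cosmetic (the paper uses $c\cdot 2^{4p(n)}$ repetitions and threshold $\tfrac{1}{2}+\tfrac{1}{2^{p(n)}}-\tfrac{1}{2^{2p(n)}}$), and your discussion of why the path-summation approach fails for quantum oracles is an apt addition.
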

\begin{proof}
    Let $L \in \PQP^U$ and $V^U$ be the corresponding verifier. Suppose the verifier $V^U$ accepts with probability at least $\frac{1}{2} + \frac{1}{2^{p(n)}}$ on an input $x$ of length $n$ if $x \in L_{yes}$ and with probability at most $\frac{1}{2}$ otherwise. Then define a verifier $W^U$ which on input $x$ repeats $V^U(x)$ independently $c2^{4p(n)}$ times for a sufficiently large constant $c$ and accepts iff the fraction of accepting iterations is at least $\frac{1}{2} + \frac{1}{2^{p(n)}} - \frac{1}{2^{2p(n)}}$. Using Chernoff bound (Proposition \ref{prop:chernoff}), $W^U$ makes an error with probability at most $1/3$ if $c$ is a sufficiently large constant. Further, $W^U$ can be run in polynomial space, since it can reuse the same qubits for each iteration of running $V^U(x)$ (which takes polynomially many qubits) and store a counter for the number of accepting iterations on the classical tape, which take $O(p(n))$ classical bits. Hence, $L \in \BQPSPACE^U$.
\end{proof}

\paragraph{$\TQBF$ completeness fails for quantum oracles.}In the classical setting, it is known that if we allow the number of alternating proofs in $\PH$ to grow polynomially with the length of the input, then we get the same class as $\PSPACE$. This follows from Lemma \ref{lem:tqbfpspacecomp}. However, in the case of quantum oracles, $\PSPACE$ and $\TQBF$ do not make sense anymore (note that $\TQBF$ may not be complete for $\BQPSPACE$ even for classical oracles). If we try to do a similar proof as the classical one, but now for $\BQPSPACE$ and $\polyQCPH$, it also does not seem to work. In particular, suppose the classical alternatively quantified proofs are supposed to capture a path in the configuration graph of the classical tape of a $\BQPSPACE$ machine $Q$. Since $Q$ is a $\BQPSPACE$ machine, it can potentially run for an exponential amount of time, provided it does not use more than polynomial space. Therefore, we can't defer measurements in $Q$ by adding ancilla qubits. In particular, the future steps of the classical tape can depend on these intermediate measurement outcomes. Therefore, a $\polyQCPH$ predicate may not capture this possibility, since all the quantum measurements in $\polyQCPH$ happen after all the classical proofs have been specified. Therefore, while $\polyQCPH^U \subseteq \BQPSPACE^U$, the reverse containment does not seem to hold.

\subsection{Oracle Separation between \texorpdfstring{$\QMA$}{QMA} and \texorpdfstring{$\polyQCPH$}{polyQCPH}}
In this section we show that the quantum oracle separation between $\QMA$ and $\QCMA$ established in \cite{AK07} extends to an oracle separation between $\QMA$ and $\QCPH$. More strongly, it extends to a separation between $\QMA$ and $\polyQCPH$, thus also giving a quantum oracle separation between $\PQP$ and $\polyQCPH$.
\paragraph{\cite{AK07} Quantum oracle problem.} Given either the oracle $U_\psi = I - 2\ket{\psi}\bra{\psi}$ for some $n$-qubit Haar-random state $\ket{\psi}$ (YES case), or the identity operator $I$ on qubits (NO case), distinguish between the two cases. \\
\\
\noindent Given a query algorithm $A$, \cite{AK07} consider for every $n$-qubit pure state $\ket{\psi}$, the classical witness $w$ which maximizes the acceptance probability of $A$ on $U_\psi$. They use the witnesses to partition the space of $n$-qubit pure states, say the partition for witness $w$ is called $S(w)$, and they pick the witness $w$ with the partition of the largest size. Then they show that if $A$ makes a small number of queries, then $A(\cdot, w)$ can not distinguish between $U_\psi$ and $U = I$ with high probability over the choice of a uniformly random state $\ket{\psi}$ from $S(w)$. We present a slightly different formulation of the proof from \cite{AK07}, by considering for every $n$-qubit pure state $\ket{\psi}$, the classical witness $w$ which maximizes the gap between the acceptance probability of $A$ for $U_\psi$ and $U = I$.

\begin{theorem}[Modification of Theorem 3.3 of \cite{AK07}]\label{thm:lowerbound}
    Suppose we have a query algorithm $A$ with oracle access to an $n$-qubit unitary $U$, which is chosen from one of the two cases below:
    \begin{enumerate}
        \item $U = U_\psi$
        %There exists a state $\ket{\psi}$ such that $U\ket{\psi} = -\ket{\psi}$ and $U\ket{\phi} = \ket{\phi}$ for every state $\ket{\phi}$ that is orthogonal to $\ket{\psi}$
        \item $U = I$. 
    \end{enumerate}
    For a particular oracle $U_\psi$, let $w$ be the witness of length $m$ which maximizes the distinguishing advantage of $A$ between $U_\psi$ and $I$ (we call this good witness). For a given witness $w$, define $S(w)$ to be the set of states $\ket{\psi}$ such that $w$ is the good witness for $U_\psi$. Then there exists a witness $w^{\ast}$ such that with high probability over choice of states $\ket{\psi} \in S(w^{\ast})$, the algorithm $A$ (given any witness of length $m$) needs $\Omega\left(\sqrt{\frac{2^n}{m+1}}\right)$ queries to distinguish between $U_\psi$ and $I$ with bounded error.
\end{theorem}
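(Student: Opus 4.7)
The plan is to follow the hybrid argument of \cite{AK07} but, as suggested by the reformulation, reduce everything to bounding what the \emph{best} witness can do for a ``typical'' $\psi$ drawn from a large bucket. First, I would use pigeonhole on the partition $\{S(w)\}_{w \in \{0,1\}^m}$: since these sets cover the Haar measure on $n$-qubit pure states, there exists some witness $w^{\ast}$ with $\mu(S(w^{\ast})) \ge 2^{-m}$. Letting $\sigma$ denote the uniform distribution on $S(w^{\ast})$, this is a $p$-uniform measure with $p = 2^{-m}$. The crucial observation, which saves us from having to quantify over all witnesses simultaneously, is that by definition of $w^{\ast}$, if $A(\cdot, w^{\ast})$ fails to distinguish $U_\psi$ from $I$ on some $\psi \in S(w^{\ast})$, then no other witness does any better on that $\psi$. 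So it suffices to upper bound the advantage of $A(\cdot, w^{\ast})$ for $\psi \sim \sigma$.

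Next I would run the standard BBBV-type hybrid argument. Fix $w^{\ast}$ and let $\ket{\phi_t^{I}}$ denote the state of $A(\cdot, w^{\ast})$ after $t$ queries when the oracle is $I$; let $\ket{\phi_t^{\psi}}$ denote the same state when the oracle is $U_\psi$. By unitarity and the triangle inequality,
\[
\bigl\lVert \ket{\phi_T^\psi} - \ket{\phi_T^I}\bigr\rVert \le \sum_{t=1}^{T} \bigl\lVert (U_\psi - I)\ket{\phi_{t-1}^I}\bigr\rVert
= 2\sum_{t=1}^{T} \bigl\lvert\braket{\psi}{\phi_{t-1}^I}\bigr\rvert .
\]
Squaring and applying Cauchy--Schwarz gives $\bigl\lVert \ket{\phi_T^\psi} - \ket{\phi_T^I}\bigr\rVert^2 \le 4T \sum_{t=1}^T \lvert\braket{\psi}{\phi_{t-1}^I}\rvert^2$. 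The point of this hybridization is that $\ket{\phi_{t-1}^I}$ does not depend on $\psi$, so taking expectation over $\psi \sim \sigma$ commutes cleanly with the sum.

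Now I would invoke the Geometric Lemma (Lemma \ref{lem:geometriclemma}) with $\rho_{t} = \ketbrab{\phi_{t-1}^I}$ and the $p$-uniform measure $\sigma$:
\[
\E_{\ket\psi \sim \sigma} \bigl\lvert \braket{\psi}{\phi_{t-1}^I}\bigr\rvert^2
= \E_{\ket\psi \sim \sigma}\bra{\psi}\rho_t\ket{\psi}
= O\!\left(\frac{m+1}{2^n}\right),
\]
since $\log(1/p) = m$. Summing over $t$ yields $\E_{\psi \sim \sigma} \bigl\lVert \ket{\phi_T^\psi} - \ket{\phi_T^I}\bigr\rVert^2 = O\!\left(T^2 (m+1)/2^n\right)$. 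Since distinguishing with bounded error requires $\bigl\lVert \ket{\phi_T^\psi} - \ket{\phi_T^I}\bigr\rVert$ to be a constant, Markov's inequality then shows that a constant fraction of $\psi \in S(w^{\ast})$ (in fact, all but an $o(1)$ fraction) force $A(\cdot, w^{\ast})$ to have advantage $o(1)$ unless $T = \Omega\bigl(\sqrt{2^n/(m+1)}\bigr)$. Combined with the optimality of $w^{\ast}$ on $S(w^{\ast})$, this gives the claim for arbitrary witnesses.

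The main obstacle I anticipate is not the inequalities themselves but bookkeeping two subtleties: (i) making sure the hybrid state $\ket{\phi_{t-1}^I}$ really is independent of $\psi$ so that the Geometric Lemma applies to a fixed density matrix (this is why we hybridize against $I$ rather than against some intermediate oracle), and (ii) arguing rigorously that $\sigma$ is a $p$-uniform measure with $p = 2^{-m}$, which requires observing that the partition $\{S(w)\}$ is defined in terms of the deterministic algorithm $A$ and therefore gives a well-defined event of Haar probability at least $2^{-m}$ for some $w^{\ast}$. Once both points are set up, the rest is the clean calculation sketched above.
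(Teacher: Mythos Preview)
Your proposal is correct and follows essentially the same approach as the paper, which in turn defers the hybrid-argument details to \cite{AK07}; you simply spell those details out. One small notational point: since $A$ has workspace, $\bigl\lVert (U_\psi - I)\ket{\phi_{t-1}^I}\bigr\rVert$ equals $2\sqrt{\bra{\psi}\rho_t\ket{\psi}}$ where $\rho_t$ is the \emph{reduced} state on the oracle register, not literally $2\lvert\braket{\psi}{\phi_{t-1}^I}\rvert$ --- but the Geometric Lemma applies to any density matrix, so the rest of your calculation goes through unchanged.
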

\begin{proof}
    Suppose $A$ makes at most $T$ queries. We know by definition that $\{S(w)\}_w$ forms a partition of the set of $n$-qubit pure states, and we can pick a witness $w^{\ast}$ such that
    \begin{align*}
        \Pr_{\ket{\psi}\sim \mu}[\ket{\psi} \in S(w^{\ast})] \geq \frac{1}{2^m}
    \end{align*}
    For every state $\ket{\psi} \in S(w^{\ast})$, let $\ket{\Phi^\psi}$ be the final state of $A$ when $U = U_\psi$, and $\ket{\Phi^I}$ be the final state of $A$ when $U = I$. Then \cite{AK07} show, using Lemma \ref{lem:geometriclemma}, that
    \begin{align}\label{eqn:hybrid}
        \E_{\ket{\psi}\sim S(w^{\ast})}[||\ket{\Phi^\psi} - \ket{\Phi^I}||_2] \leq O\left(T \sqrt{\frac{m+1}{2^n}}\right).
    \end{align}
    Note that the above proof also works when the quantum algorithm is given query access not just to the unitary $U$, but also its conjugate, transpose and inverse. In particular, Equation \ref{eqn:hybrid} is obtained using the hybrid method, which essentially replaces one-by-one the queries of the quantum algorithm in the YES case with the $I$ oracle (NO case), while arguing that the behaviour of the algorithm does not change much. This process can also be done when queries to the conjugate, transpose or inverse of the unitary $U_{\psi}$ are present.
    Therefore, using Markov's inequality, for every query algorithm $A$ we have,
    \begin{align}
        \Pr_{\ket{\psi}\sim S(w^{\ast})}\left[||\ket{\Phi^\psi} - \ket{\Phi^I}||_2 \geq \omega\left(T \sqrt{\frac{m+1}{2^n}}\right)\right] \leq o(1).  \label{eq:p-uniform}
    \end{align}
    In particular, with probability at least $2/3$ over choice of states $\ket{\psi} \in S(w^{\ast})$, $A$ needs $T = \Omega\left(\sqrt{\frac{2^n}{m+1}}\right)$ queries to distinguish between $U_\psi$ and $U = I$ with constant bias.
\end{proof}

\begin{theorem}\label{thm:QMA-polyQCPH-U}
    There exists a quantum oracle $U$ such that $\QMA^U \not\subset \polyQCPH^U$.
\end{theorem}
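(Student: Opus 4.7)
The plan is to use the \cite{AK07} oracle problem of distinguishing $U_\psi$ for a Haar-random $\ket{\psi}$ from $U=I$, and argue that it witnesses not just the original $\QMA$ vs.\ $\QCMA$ separation but the stronger $\QMA^U\not\subset\polyQCPH^U$. The $\QMA^U$ upper bound is the standard one from \cite{AK07}: the verifier takes $\ket{\phi}$ as its quantum witness, prepares an ancilla in $\ket{+}$, applies controlled-$U$ with $\ket{\phi}$ as the target, and measures the ancilla in the Hadamard basis. With $U=U_\psi$ and $\ket{\phi}=\ket{\psi}$ the ancilla becomes $\ket{-}$ (accept), while with $U=I$ it stays $\ket{+}$ (reject), giving perfect completeness and soundness.

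For the lower bound I would assume for contradiction that $L\in\polyQCSigma_i^U$ for some constant $i$ (the $\polyQCPi_i$ case being symmetric), with verifier $V^U$ making $T=\poly(n)$ queries on witnesses $y_1,\dots,y_{n^i}\in\{0,1\}^{p(n)}$. Concatenating all of them into a single $m$-bit string $w$ with $m=n^i p(n)=\poly(n)$, I then apply Theorem \ref{thm:lowerbound} to $V$ regarded as a query algorithm with classical advice $w$. This produces a witness $w^\dagger$ and a set $S(w^\dagger)$ of density at least $2^{-m}>0$ such that, with constant probability over $\psi$ drawn uniformly from $S(w^\dagger)$, one has $|\Pr[V^{U_\psi}(w)=1]-\Pr[V^I(w)=1]|\leq\epsilon$ for every $w\in\{0,1\}^m$, with $\epsilon=o(1)$ because $T\sqrt{(m+1)/2^n}=o(1)$ when $T,m=\poly(n)$. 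The uniformity in $w$ comes for free: $w^\dagger$ is by construction the maximizer of the distinguishing advantage on $S(w^\dagger)$, so if $w^\dagger$ itself fails to distinguish then so does every other witness.

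To close the contradiction, note that the completeness condition $\exists y_1\forall y_2\cdots\Pr[V\text{ accepts}]\geq 2/3$ is equivalent to the max-min value $f(U):=\max_{y_1}\min_{y_2}\cdots\Pr[V^U\text{ accepts}]$ being at least $2/3$, and the soundness condition $\forall y_1\exists y_2\cdots\Pr[V\text{ accepts}]\leq 1/3$ is equivalent to $f(U)\leq 1/3$ for the same $f$, so both correctness constraints refer to the identical quantity. Since $V^{U_\psi}(w)\approx V^I(w)$ for every $w$ forces $|f(U_\psi)-f(I)|\leq\epsilon$, the protocol cannot simultaneously satisfy $f(U_\psi)\geq 2/3$ on yes inputs and $f(I)\leq 1/3$ on no inputs for a positive-measure set of $\psi\in S(w^\dagger)$, yielding the contradiction. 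The main conceptual point where care is needed is precisely that Theorem \ref{thm:lowerbound} already gives a bound uniform over all $m$-bit witnesses rather than only over the maximizer $w^\dagger$: a naive union bound over the $2^m$ combined witness sequences would require $T^2\cdot 2^m\ll 2^n$, which fails for $m=\poly(n)$, and the partition-by-maximizer trick of \cite{AK07} is what lets the alternating quantifier structure be absorbed essentially for free.
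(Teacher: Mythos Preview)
Your proposal is correct and follows essentially the same approach as the paper: both concatenate the $n^i$ alternating proofs into a single $m=\poly(n)$-bit string, invoke the partition-by-maximizer argument of Theorem~\ref{thm:lowerbound} to conclude that for a positive-measure set of $\psi\in S(w^\ast)$ \emph{every} witness $w$ yields $o(1)$ distinguishing advantage between $U_\psi$ and $I$, and then observe that this gap is incompatible with the $2/3$ vs.\ $1/3$ separation required by the $\polyQCSigma_i$ definition. Your max--min value $f(U)$ is a clean way to collapse the alternating quantifiers (the paper instead case-splits on whether $\exists y_1\forall y_2\cdots\Pr[\text{accept}]>1/2$ and sets $U_{n_i}$ to the opposite case), and you omit the explicit diagonalization over all verifiers and polynomial bounds that the paper carries out to produce a single oracle $U$ fooling every $\polyQCPH$ machine---but that step is standard and your argument supplies exactly the per-verifier failure needed to drive it.
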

\begin{proof}
    Let $U = \{U_n\}_n$ be a family of unitaries, where each unitary is chosen from one of the two cases from Theorem \ref{thm:lowerbound}. Define a language $L^U$ corresponding to the oracle $U$ as follows: $1^n \in L^U$ iff $U_n = U_\psi$ for some state $\ket{\psi}$. It is shown in \cite{AK07}, that $L^U \in \QMA^U$ for all quantum oracles $U$. \\
    \\
    \noindent Now we show that there exists a quantum oracle $U$ such that $L^U \notin \polyQCPH^U$. We order all the $\polyQCPH^U$ verifiers $A$ and polynomials $p, q: \N \rightarrow \N$ in some order $\{A_i, p, q\}_{i \in \N, p, q}$. We will assume without loss of generality that $p, q$ are of the form $n^k$ for $k \in \N$, so there are only countably many polynomials. Pick the verifier $A_i$, which runs in time $p(n)$ on inputs of length $n$. Consider the action of $A_i$ on a sufficiently long input of length $n_i$ (depending on $p,q$), where $n_i > N$, where we have already fixed the oracle on inputs upto length $N$. Suppose $A_{n_i}$ takes $q(n_i)$ alternatively quantified proofs each of length $q(n_i)$. Fix the witness $w^{\ast}$ (of length, $m = q^2(n_i)$) such that $S(w^{\ast})$ satisfies the following:
    \begin{align*}
        \Pr_{\ket{\psi}\sim \mu}[\ket{\psi} \in S(w^{\ast})] \geq \frac{1}{2^m}
    \end{align*}
    Then we know from Theorem \ref{thm:lowerbound} that with probability at least $2/3$ over choice of states $\ket{\psi} \in S(w^{\ast})$, $A_i(1^{n_i}, w^{\ast})$ fails to distinguish between $U_\psi$ and $U = I$ with constant bias, that is,
    \begin{align*}
        |\Pr[A_i^{U_\psi}(1^{n_i}, w^{\ast}) = 1] - \Pr[A_i^{U}(1^{n_i}, w^{\ast}) = 1]| \leq O\left(T \sqrt{\frac{m+1}{2^{n_i}}}\right) = o(1).
    \end{align*}
    since $T \leq p(n_i)$. Pick one such state $\ket{\psi} \in S(w^{\ast})$. Now we set $U_{n_i}$ as follows: if
    \begin{align*}
        \exists y_1 \forall y_2 \ldots Q_{q(n_i)}y_{q(n_i)} \Pr[A^{U_\psi}(1^{n_i}, y_1, y_2, \ldots, y_{q(n_i)}) = 1] > \frac{1}{2}
    \end{align*}
     then set $U_{n_i} = I_{n_i}$. Otherwise set $U_{n_i}$ to be $U_\psi$. Note that the distinguishing bias of $A$ for $U_{n_i}$ is highest for the witness $w^{\ast}$, so if this distinguishing bias is $o(1)$ then so is the bias for every witness. Therefore, in case 1, when $U_{n_i} = I_{n_i}$:
    \begin{align*}
        \exists y_1 \forall y_2 \ldots Q_{q(n_i)}y_{q(n_i)} \Pr[A^{U_{n_i}}(1^{n_i}, y_1, y_2, \ldots, y_{q(n_i)}) = 1] > \frac{1}{3}
    \end{align*}
    In case 2, when $U_{n_i} = U_\psi$:
    \begin{align*}
        \forall y_1 \exists y_2 \ldots \overline{Q}_{q(n_i)}y_{q(n_i)} \Pr[A^{U_{n_i}}(1^{n_i}, y_1, y_2, \ldots, y_{q(n_i)}) = 1] < \frac{2}{3}
    \end{align*}
    Therefore, $U$ is set such that $A_i$ (running in time $p$ and taking witnesses of length $q$) makes an error in deciding $L^U$ on the input of length $n_i$.
\end{proof}

\subsection{Discussion on Classical Oracle Separations}\label{sec:disc}
It's important to understand why a strategy like in the proof of Theorem~\ref{thm:QMA-polyQCPH-U} would not work for classical oracles, and we now provide some discussion on this. The particular focus will be on the full classical oracle separations of \cite{BHNZ25, BHV26}. In the \cite{AK07} quantum oracle problem, there is a single NO oracle $U=I$. Therefore for a given algorithm $A$, and any YES instance $U_\psi$, we can consider a witness $w$ which maximizes the distinguishing advantage of $A$ between $U_\psi$ and $I$. This can then be used to define the set $S(w)$ for every witness $w$ and it allows us to pick $w^{\ast}$, such that $S(w^{\ast})$ has large number of YES instances. We can now pick a YES instance from $S(w^{\ast})$, and no other witness can do better than $w^{\ast}$ to distinguish from the single NO instance. If there were multiple NO instances, then for any YES instance, there will be different witnesses $w$ maximizing the distinguishing advantage for different NO instances. In particular, it is possible that any given choice of proofs $y_1,\ldots, y_{q(n)}$ maximizes the distinguishing advantage for \emph{some} NO instance. It's not clear how to define the set $S(w)$ and consider $S(w^\ast)$ under these circumstances.

We believe it's simply not possible to get a classical oracle problem with a single NO instance for which a statement such as Theorem \ref{thm:lowerbound} holds. Indeed, this can be verified to not be true for the candidate classical oracle separations between $\QMA$ and $\QCMA$ in \cite{LLPY24,BK24,Zha25,LMY24, BHNZ25, BHV26} (note that \cite{BHNZ25, BHV26} show a full classical oracle separation between $\QMA$ and $\QCMA$). All of these have multiple NO instances. For these oracles, one could try to pick a $w$ which maximizes the distinguishing advantage between the YES instance and some fixed NO instance. We now discuss why this way of picking $S(w)$ cannot be combined with the proof techniques of \cite{LLPY24,BK24,Zha25,LMY24} to give a separation between $\QMA$ and $\polyQCPH$ (or even $\QCPH$ --- indeed we believe all of these problems are contained in $\PH$).\footnote{Note that none of these works prove a full unconditional separation between $\QMA$ and $\QCMA$ with a classical oracle. However, unlike the $\QMA$ vs $\QCMA$ candidate of \cite{AK07}, the candidates in these works cannot be extended to prove a $\QMA$ vs $\polyQCPH$ separation even in the limited or conditional settings that are considered in these works.} In all of these works, after fixing a witness $w$, the proof technique shows shows that there is a YES instance (or there are many YES instances) for which this $w$ is a valid witness, and a NO instance (or many NO instances) such that the algorithm $A(w)$ after fixing $w$ cannot tell the YES instance from the NO instance. However, the NO instance here might not be the same NO instance we maximized the gap in acceptance probability with to define $S(w)$ in the first place, so this strategy will not work.

One could instead consider a different way of defining $S(w)$, for example, for each YES instance, pick the $w$ that maximizes the gap between the acceptance probability of that YES instance, and the average acceptance probability of NO instances. Suppose we are working with two alternating proofs, so $w=y_1y_2$. Not every $y_2$ is a valid proof for every NO instance, so the expectation described above has to be over NO instances for which $y_2$ is a valid proof. So overall, for every YES instance we pick a proof $y_1$ and for every NO instance we pick a proof $y_2$ such that
\[ \left|\Pr\left[A^\text{yes}(y_1y_2)\right] - \E_{\text{no} \sim T(y_2)}\Pr\left[A^\text{no}(y_1y_2)\right]\right|\]
is maximized, where $T(y_2)$ is the set of NO instances for which $y_2$ is the picked proof. If $y_1$ and $y_2$ are both of length $p(n)$, we can pick $y_1^*$ and $y_2^*$ such that the set $S(y_1^*)$ of YES instances corresponding to $y_1^*$ is a $2^{-p(n)}$ fraction of all YES instances, and $T(y_2^*)$ is a $2^{-p(n)}$ fraction of all NO instances.\footnote{This way of picking the witness can be extended to when we have more than two alternating proofs as well --- we simply restrict to a fraction of YES or NO instances for each alternating proof.} One can then try to find YES instance in $S(y_1^*)$ and a NO instance in $T(y_2^*)$ (for which the acceptance gap with $S(y_1^*)$ is actually small --- but there may be many such instances) such that they cannot be distinguished by $A(y_1^*y_2^*)$. This strategy should fail for all of the classical candidates in \cite{LLPY24,BK24,Zha25,LMY24}. In the rest of this discussion, we provide a detailed explanation for why any such strategy fails for the candidates in \cite{BHNZ25, BHV26}. The rest of this section is skippable without loss of continuity, for readers who are not interested in a detailed explanation.

 \paragraph{\cite{BHNZ25} Classical Oracle Separation between $\QMA$ and $\QCMA$.} This work separates $\QMA$ and $\QCMA$ with respect to a classical oracle. The classical oracle consists of two functions $S, U: \{0,1\}^n \rightarrow \{0,1\}$. For the rest of the discussion we will think of the functions $S, U$ as specifying two subsets $S, U \subseteq \{0,1\}^n$ ($x \in S$ iff $S(x) = 1$), and we define $\Pi_S, \Pi_U$ to be the projectors onto the subspaces specified by the subsets $S, U$. Then the goal is to distinguish between the following two cases:
 \begin{itemize}
     \item YES case: $\exists~\ket{\psi}$: $||\Pi_UH^{\otimes n}\Pi_S\ket{\psi}||^2 \geq a$
     \item NO case: $\forall~\ket{\psi}$: $||\Pi_UH^{\otimes n}\Pi_S\ket{\psi}||^2 \leq b$
 \end{itemize}
 They show that for a certain distribution $\mathsf{Strong}$ over $(S,U)$, if there is a $\QCMA$ algorithm that can distinguish between the YES and NO cases successfully, then there is an efficient quantum sampler which when given oracle access to $U$, can generate many distinct samples from $S$ with reasonable probability (which will later be used to derive a contradiction). The intuition for this is the fact that if a certain classical proof can help distinguish between the two cases, then it can be reused multiple times to generate many samples from $S$, and that should not be possible (they show in the paper that getting many samples from $S$ by looking at $U$ should not be possible with high probability). The distribution $\mathsf{Strong}$ has the property that if $\Delta \subseteq S$ has a sufficiently small size, then $(\Delta, U)$ forms a NO instance. The conversion from the $\QCMA$ algorithm to the sampler then works as follows: Guess the classical witness $w$, and set $\Delta = \emptyset$. Then pick the $j^{th}$ query to $S$ uniformly at random, run the algorithm (with $O_{\Delta}$ instead of $O_S$) upto the $j^{th}$ query, and measure the state to get a string $x$. This should be a string in $S\setminus \Delta$ if the algorithm distinguishes between $(S,U)$ and $(\Delta, U)$. Output $x$ and update $\Delta = \Delta \cup \{x\}$. Repeat the same process to get the desired number of samples.

 Note that the above conversion crucially uses the fact that the witness $w$ guessed by the sampler, if correct, can help the algorithm distinguish between $(S,U)$ and $(\Delta, U)$ for any $\Delta \subseteq S$ such that $(\Delta, U)$ is a NO instance. Thus, if the sampler guesses the correct witness, then it can keep reusing it for each iteration and get a new sample each time. This means that the sampler pays an initial cost of loss in probability for guessing the witness, but does not incur this cost for every iteration. However, this is not the case if we had a $\QCPH$ algorithm instead. To illustrate this, consider $\QCSigma_2$: in the NO case, we have the promise that $\forall w_1 \exists w_2$ such that the algorithm accepts with low probability. In particular, the witness $w_2$ can be different for different choices of $\Delta$, which means that the sampler as described above, will need to guess the witness $w_2$ again for every iteration. This means that the sampler will have a very small success probability (hence we do not get a similar contradiction as the $\QCMA$ case).

\paragraph{\cite{BHV26} Simpler classical Oracle Separation between $\QMA$ and $\QCMA$.} \cite{BHV26} give a classical oracle separation between QMA and QCMA with a simpler proof than in \cite{BHNZ25}. Their oracle is a modification of the oracles in \cite{LLPY24} and \cite{BK24}, and their result basically subsumes the results of those works. The classical oracle $O[H,E]$ in \cite{BHV26} is parametrized by a uniformly random function $H:\{0,1\}^{\Theta(n\log n)} \to \{0,1\}$, and a subset $E \subseteq \{0,1\}^n\times\{0,1\}^{\Theta(n^2)}$. Each function $H$ defines a relation $R_H$ on $\{0,1\}^n\times\{0,1\}^{\Theta(n^2)}$, and the entries of the oracle $O[H,E]$ are indexed by $(x,u) \in \{0,1\}^n\times\{0,1\}^{\Theta(n^2)}$. The entries are given by
\[ O[H,E](x,u) = \begin{cases} 1 & \text{ if } (x,u) \in R_H \land (x,v) \in E \\ 0 & \text{ otherwise,}\end{cases} \]
and the goal is to distinguish between $E = \{0,1\}^n\times\{0,1\}^{\Theta(n)}$ (YES case) and $E \subseteq F\times\{0,1\}^{\Theta(n)}$, where $F \leq 2^n/3$ (NO case).
The core proof idea is similar to \cite{BHNZ25}: they use a QCMA algorithm to produce many $(x,v)$ pairs in the relation $R_H$ without making any queries at all. They then use list decoding properties of the code used to define $R_H$ to show that producing many such pairs in $R_H$ without any queries is only possible with very small probability. The reduction from the QCMA algorithm to the sampler for $R_H$, as before, guesses a QCMA witness for the yes instance corresponding to the unknown $H$ (there is only a single yes instance for every $H$), runs the QCMA algorithm on the NO instance with $E=\emptyset$ (which does not depend on $H$ at all), measures the query register on a random query, and then updates $E$ with the measurement result. As long as this is not run for too many iterations, the size of $E$ in each iteration will be small enough for the resulting oracle to be a NO instance, and a new $(x,v)$ pair in $R_H$ will be produced with decent probability.

The problem in \cite{BHNZ25} is likely not in $\QCPH$, but this problem is in $\PH$, and in fact in $\AM$. So the proof technique should not generalize to QCPH algorithms. We can see that this is indeed the case for similar reasons as for \cite{BHNZ25}: the technique relies on being able to use the same YES instance witness to distinguish many NO instances, and this is no longer true for $\QCPH$ witnesses.

\section{Distributional Oracles}
The distributional oracle model was introduced by \cite{NN24}, with the aim of making progress towards showing a classical oracle separation between $\QMA$ and $\QCMA$. In this model, there is a family of YES distributions $\{D_Y\}_Y$ and a family of NO distributions $\{D_N\}_N$. Each distribution $D_Y$ (or $D_N$) has support on, say, $M$ elements. The oracle $O$ is sampled from one of these families of distributions and the aim of an algorithm is to tell which family of distributions the oracle is sampled from, potentially with the help of a prover who can send a classical witness. The crucial difference from classical oracle setup is that the prover only knows the distribution $D_Y$ (or $D_N$) from which $O$ is sampled, and not the explicit sample $O$. After the sampling of $O$, the $i \in [M]$ that tells which element in the support is picked is provided to the verifier. In particular, the prover does not know $i$, so the proof it sends cannot depend on $i$. Note that the number $[M]$ has to be small, otherwise the verifier cannot read $i$.

\subsection{\texorpdfstring{$\QMA$}{QMA} is contained in \texorpdfstring{$\PSPACE$}{PSPACE}}
In this section, we show that for every distributional oracle $D$, $\QMA^D$ is contained in $\PSPACE^D$. We begin by noting that $\QMA^D \subseteq \PQP^D$ for every distributional oracle $D$, and this proof is the same as Lemma \ref{lem:qmainpqp}, with the quantum oracle replaced by a distributional oracle. We now show that $\PQP^D \subseteq \PSPACE^D$ for every distributional oracle $D$. This is essentially the same argument as \cite{NC16}.

\begin{lemma}\label{lem:pqpinpspacedist}
    For every distributional oracle $D$, $\PQP^D \subseteq \PSPACE^D$. The same holds true for all classical oracles $O$.
\end{lemma}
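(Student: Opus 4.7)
The plan is to apply the standard Feynman path-integral simulation of quantum circuits in polynomial space, originally due to Bernstein and Vazirani and as presented in \cite{NC16}. Let $L \in \PQP^D$ with verifier circuit $V_n$ running in time $t = \poly(n)$. Using the principle of deferred measurement (adding polynomially many ancillas, which is fine since $V_n$ already uses only $\poly(n)$ qubits), we may assume $V_n$ is a sequence of unitaries $V_n = G_t G_{t-1} \cdots G_1$ followed by a single standard-basis measurement of a designated output qubit. Each $G_i$ is either a gate from the constant-size universal set or a (possibly controlled) oracle query. Inserting resolutions of the identity between gates, the acceptance probability can be written as the Feynman sum
\begin{align*}
\Pr[V_n(x) = 1] = \sum_{y :\, y_{\text{out}} = 1} \left| \sum_{z_1, \ldots, z_{t-1}} \bra{y} G_t \ket{z_{t-1}} \bra{z_{t-1}} G_{t-1} \ket{z_{t-2}} \cdots \bra{z_1} G_1 \ket{x, 0^s} \right|^2 ,
\end{align*}
where each $z_i$ ranges over computational basis states on $\poly(n)$ qubits and $s = \poly(n)$ is the number of ancillas.

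A $\PSPACE^D$ machine enumerates the tuples $(z_1, \ldots, z_{t-1}, y)$ one at a time, reusing the same tape cells across tuples. For each fixed tuple, the product $\prod_i \bra{z_{i+1}} G_{i+1} \ket{z_i}$ is accumulated iteratively. Because every $G_i$ acts on $O(1)$ qubits, each matrix entry $\bra{z_{i+1}} G_{i+1} \ket{z_i}$ vanishes unless $z_i$ and $z_{i+1}$ agree outside the qubits acted on by $G_i$, and when nonzero takes one of finitely many values (e.g.\ $\pm 1$, $\pm 1/\sqrt{2}$) computable in polynomial time. For oracle-query gates, the machine simply consults the oracle: in the classical case this is a direct lookup, and in the distributional case the machine first reads the sample index $i \in [M]$ provided to the verifier and then queries the corresponding sampled classical oracle at the relevant location. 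The partial product for each path is then added into a running total. At the end, the running total is compared against $1/2 + 1/2^{q(n)+1}$ to decide the output.

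The main (and only) technical issue is numerical precision. Since the yes/no gap is only $1/2^{q(n)}$, the total truncation error accumulated over the $2^{\poly(n)}$ paths must be kept well below this margin. Setting the working fixed-point precision to some sufficiently large $\poly(n)$ number of bits (dominating $q(n)$ plus a logarithm of the number of rounding operations) ensures that the accumulated rounding error is at most $2^{-q(n)-2}$. All arithmetic on $\poly(n)$-bit rationals can be performed in polynomial space, and, crucially, only one tuple $(z_1,\ldots,z_{t-1},y)$ together with the current partial product and running total are in memory at any time, so the total space usage remains polynomial. Since the oracle is queried in exactly the same way in the classical and the distributional settings (with the distributional case requiring only that the sample index be read once at the start), the argument establishes both parts of the lemma uniformly.
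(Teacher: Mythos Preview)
Your proof is correct and follows essentially the same Feynman path-sum simulation as the paper: insert resolutions of the identity between gates, enumerate computational-basis paths one at a time in polynomial space, and observe that oracle queries on classical basis states reduce to classical lookups (so a $\PSPACE$ machine can answer them). Your treatment is in fact slightly more careful than the paper's, since you explicitly address the numerical precision needed to resolve the exponentially small $\PQP$ gap, which the paper leaves implicit.
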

\begin{proof}
    Let $L \in \PQP^D$ and $V^D$ be the corresponding verifier. We assume that $V^D$ makes all it measurements at the end of the computation, and on an input $x$ of length $n$, it stops in time $p(n)$, where $p: \N \rightarrow \N$ is a polynomial. Then we can think of the action of $V^D$ on the input $x$ as follows:
    \begin{align*}
        V^D(x) = V_{p(n)}O_DV_{p(n)-1}\ldots O_DV_1 \ket{x0^{q(n)}}
    \end{align*}
    where $q(n)$ is the number of ancilla qubits used by $V^D$. Then we define a classical Turing machine $M^D$ which computes the probability that the output of measuring $V^D(x)$ is $y$. This is done by inserting $I = \sum_z \ket{z}\bra{z}$ between every $V_i$ and $O_D$, and then computing the following sum:
    \begin{align*}
        \sum_{z_1, z_2, \ldots z_{2p(n)-2}} \bra{y}V_{p(n)}\ket{z_{2p(n)-2}}\bra{z_{2p(n)-2}}O_D\ket{z_{2p(n)-3}} \ldots \bra{z_2}O_D\ket{z_1} \bra{z_1}V_1\ket{x0^{q(n)}}
    \end{align*}
    This sum can be computed in polynomial space, since $M$ only needs to store a partial sum and $2p(n)-1$ values for computing the products in each sum. Note that to compute these products, $M$ only needs to make classical queries to $O_D$ because $z_i$ are classical strings. Hence $L \in \PSPACE^D$. Note that the proof also works if $D$ is a classical oracle.
\end{proof}

\paragraph{$\TQBF$ completeness fails for distributional oracles.} In the classical oracle model, it is known that if we allow the number of alternating proofs in $\PH$ to grow polynomially with the length of the input, then we get the same class as $\PSPACE$. This follows from Lemma \ref{lem:tqbfpspacecomp}. However, in the case of distributional oracles, the proof from Lemma \ref{lem:tqbfpspacecomp} does not work. In particular, consider $\psi_i$ from Lemma \ref{lem:tqbfpspacecomp} which we defined as follows:
\begin{align*}
        \psi_i(C_1, C_2) := \exists C \forall D_1 \forall D_2 [(D_1 = C_1 \wedge D_2 = C) \vee (D_1 = C \wedge D_2 = C_2)] \Rightarrow \psi_{i-1}(D_1, D_2)
\end{align*}
Recall, that the existentially quantified proof $C$ is the vertex in the configuration graph which has a path of length at most $2^{i-1}$ to both $C_1$ and $C_2$. Therefore, $C$ will be dependent on the oracle, because the oracle responses determine whether there is an edge between two vertices of the configuration graph. If the proof $C$ is only allowed to depend on the distribution that the oracle is sampled from, that may not be sufficient to specify the required vertex in the configuration graph.

\subsection{Oracle Separation between \texorpdfstring{$\QMA$}{QMA} and \texorpdfstring{$\polyQCPH$}{polyQCPH}}
In this section, we consider a distributional oracle separation between $\QMA$ and $\polyQCPH$. A distributional oracle separation between $\QMA$ and $\QCMA$ was first shown by \cite{NN24}. In this section, however, we consider the alternate proof from \cite{LLPY24}. We show that the same proof works to get a distributional oracle separation between $\QMA$ and $\polyQCPH$. We first describe the oracle problem.

\paragraph{\cite{LLPY24} Distributional Oracle Problem.} Fix a code $C \subseteq \Sigma^n$, where $\Sigma = \{0,1\}^{\Theta(n)}$. This is the code $C$ defined in \cite{YZ24}, the properties of which are used in the proof of Lemma \ref{lem:distquerylowerbound} below, but we do not state these properties. Given a function $H: [n] \times \Sigma \rightarrow \{0,1\}$, define a function $f_C^H: C \rightarrow \{0,1\}^n$ as follows:
\begin{align*}
    f_C^H(v_1, v_2, \ldots, v_n) = H(1, v_1) || H(2, v_2)||\ldots ||H(n,v_n)
\end{align*}
We can use this function $f^H_C$ to define a relation $R_H \subseteq \Sigma^n \times \{0,1\}^n$ where $(v,r) \in R_H \Leftrightarrow f_C^H(v) = r$ (this is the same $R_H$ as in the \cite{LLPY24,BK24} classical oracle from Section~\ref{sec:disc}). Then the oracle $O_n^b[H_n, r_n]$ is defined as follows:
\begin{itemize}
    \item (b = 1): $O_n^1[H_n, r_n]$ takes $v \in \Sigma^n$ as input and outputs $1$ if $(v_n, r_n) \in R_H$ and $0$ otherwise.
    \item (b = 0): $O_n^0[H_n, r_n]$ takes $v \in \Sigma^n$ as input and always outputs $0$.
\end{itemize}
The final distribution on the oracles is $\{(O_n^{b_n}[H_n, r_n], r_n)\}_n$. Here the functions $H_n$ parametrize the YES and NO distributions, so the prover will know $H_n$. The support of the distribution is parametrized by the string $r_n$, since this determines the particular values the oracle takes. The strings $r_n \in \{0,1\}^n$ then correspond to $i \in [M]$ and are sampled uniformly at random. The verifier knows $r_n$, but the prover does not.

\begin{lemma}[Lemma 6.1 of \cite{LLPY24}]\label{lem:distquerylowerbound}
    Let $\mathcal{H}_n$ be the set of functions $H: [n] \times \Sigma \rightarrow \{0,1\}$, where $\Sigma = \{0,1\}^{\Theta(n)}$. Then there exists a family of oracles $\{O_n^b[H,r]\}_{n \in \N, b \in \{0,1\}, H \in \mathcal{H}_n, r\in \{0,1\}^n}$ such that for the uniform distribution $\mathcal{D}_n$ over $\mathcal{H}_n \times \{0,1\}^n$:
    \begin{enumerate}
        \item There exist a quantum polynomial-time algorithm $\mathcal{A}$ and quantum witness $\{\ket{z_H}\}_H$ of polynomial size such that
        \begin{align*}
            \Pr_{(H, r) \sim \mathcal{D}_n, b \sim \{0,1\}}[\mathcal{A}^{O_n^b[H,r]}(\ket{z_H}, r) = b] \geq 1 - \mathsf{negl}(n)
        \end{align*}
        \item For any quantum polynomial-time algorithm $\mathcal{B}$, polynomial $s: \N \rightarrow \N$ and for any $s(n)$-bit classical witness $\{z_H\}_H$
        \begin{align*}
            \left|\Pr_{(H, r) \sim \mathcal{D}_n}[\mathcal{B}^{O_n^1[H,r]}(z_H, r) = 1] - \Pr_{(H, r) \sim \mathcal{D}_n}[\mathcal{B}^{O_n^0[H,r]}(z_H, r) = 1]\right| \leq \mathsf{negl}(n)
        \end{align*}
    \end{enumerate}
\end{lemma}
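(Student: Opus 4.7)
The statement has two parts, completeness and soundness, which I would address separately. The overall shape of the argument mirrors the standard oracle-separation template: construct an efficient quantum-witness verifier that solves the problem, then apply a pigeonhole step followed by a query-complexity lower bound to rule out classical witnesses.

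For part~1 (completeness), my plan is to take $|z_H\rangle$ to be (proportional to) the uniform superposition $\frac{1}{\sqrt{|C|}}\sum_{v\in C}|v\rangle$ over codewords of $C$. Given $r$, the verifier $\mathcal{A}$ runs Grover's algorithm starting from $|z_H\rangle$, using the oracle $O_n^b[H,r]$ as the marking oracle: codewords $v$ with $f_C^H(v)=r$ are marked (in the $b=1$ case) because the oracle returns $1$ exactly on those inputs. After $T = \Theta(\sqrt{|C|})$ iterations, in the $b=1$ case the state concentrates on the (essentially unique) $v^\ast$ with $f_C^H(v^\ast)=r$, and one final query to $O_n^1[H,r]$ confirms this, producing output $1$. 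In the $b=0$ case the oracle marks nothing, so Grover leaves the state essentially unchanged, a final query returns $0$, and the verifier outputs $0$. For this to give a polynomial-time algorithm one needs $|C|$ to be at most $\poly(n)$ and the uniform superposition over $C$ preparable/describable with polynomially many qubits, both of which are guaranteed by the properties of the YZ24 code $C$ that the statement appeals to.

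For part~2 (soundness), the plan is a two-step argument. First, apply a pigeonhole: for any family $\{z_H\}_H$ with $z_H\in\{0,1\}^{s(n)}$, there is some $z^\ast$ such that $\mathcal{H}^\ast := \{H : z_H = z^\ast\}$ has density at least $2^{-s(n)}$ in $\mathcal{H}_n$. Conditioning on $\mathcal{H}^\ast$ loses only a $2^{-s(n)}$ factor in the advantage, so it suffices to upper bound the distinguishing advantage of the fixed-witness algorithm $\mathcal{B}(z^\ast,\cdot)$ on $(H,r)\sim \mathcal{D}_n|_{\mathcal{H}^\ast\times\{0,1\}^n}$. Second, apply a quantum query lower bound: since $z^\ast$ is now fixed, $\mathcal{B}(z^\ast,\cdot)$ reduces to a pure query algorithm trying to detect a single marked codeword in the $b=1$ case versus no marked codeword in the $b=0$ case, and the YZ24 properties of $C$ ensure that even after the pigeonhole restriction to $\mathcal{H}^\ast$, the induced distribution over $(H,r)$ remains sufficiently pseudorandom that any poly-query algorithm has advantage at most $\negl(n)$. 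The final bound is then the product of the $2^{-s(n)}$ pigeonhole loss (which remains at most inverse-polynomial) and the per-witness advantage, yielding $\negl(n)$ overall.

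The main obstacle is the query-complexity step in part~2. The naive search lower bound would give $\Omega(\sqrt{|C|})$, but one has to argue that the restriction to $\mathcal{H}^\ast$ — an arbitrary large subset of $\mathcal{H}_n$ possibly chosen adversarially depending on the algorithm — does not enable the algorithm to locate the unique $v^\ast$ with $f_C^H(v^\ast)=r$ substantially faster than unstructured search. This is where the structural properties of the YZ24 code $C$ (some form of list-decodability or pseudo-entropy guarantee for $f_C^H$ under random $H$) are essential; without them the $2^{-s(n)}$ conditioning could in principle leak enough information about $r$ through $H$ to break the search lower bound. Formally, I expect the cleanest route is via a compressed-oracle / hybrid argument on the joint distribution of $(H,r)$ conditioned on $H\in\mathcal{H}^\ast$, showing that each query changes the algorithm's state by at most $O(1/\sqrt{|C|})$ in trace distance averaged over this conditional distribution, so that $\poly(n)$ queries yield only negligible advantage.
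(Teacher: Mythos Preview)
First, note that the paper does not prove this lemma at all; it is quoted verbatim as Lemma~6.1 of \cite{LLPY24} and used as a black box in the subsequent theorem. So there is no in-paper proof to compare against. That said, your proposal has a genuine gap in Part~1.

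Your completeness argument hinges on $|C|\le\poly(n)$ so that Grover over $\frac{1}{\sqrt{|C|}}\sum_{v\in C}|v\rangle$ terminates in polynomial time. This is false: the Yamakawa--Zhandry code has exponentially many codewords (constant rate over an alphabet of size $2^{\Theta(n)}$), and it must, because if $|C|=\poly(n)$ then Part~2 fails outright --- a $\poly(n)$-bit classical witness could list $f_C^H(v)$ for every $v\in C$, letting the verifier locate a preimage of $r$ and decide $b$ with one oracle query. Even setting that aside, for random $(H,r)$ the marked fraction in $C$ is roughly $2^{-n}$, so Grover from the uniform superposition needs $\Theta(2^{n/2})$ iterations regardless of $|C|$. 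Note also that your proposed witness does not depend on $H$, whereas the lemma explicitly indexes it by $H$. The actual completeness argument in \cite{LLPY24} gives a witness that encodes quantum information about $H$ --- enough for the verifier to run the Yamakawa--Zhandry algorithm and produce a codeword $v$ with $f_C^H(v)=r$ without direct access to $H$ --- after which a single call to $O_n^b[H,r]$ on $v$ reveals $b$.

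For Part~2, the pigeonhole-then-lower-bound skeleton is correct and is how \cite{LLPY24} proceed, but your accounting is off. The pigeonhole is not a multiplicative ``$2^{-s(n)}$ loss'' that you multiply against a per-witness advantage. Fixing $z^\ast$ restricts $H$ to a set $\mathcal{H}^\ast$ of density $\ge 2^{-s(n)}$, and the YZ24 hardness statement says directly that the distinguishing advantage remains $\negl(n)$ even under such conditioning (equivalently, even given $s(n)$ bits of non-uniform advice about $H$). The $2^{-s(n)}$ is absorbed inside the YZ24 analysis --- the list-recoverability parameters of $C$ are chosen so that the lower bound tolerates this much conditioning --- not tacked on as a separate factor at the end.
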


\begin{theorem}
    There is a distributional oracle $O$ relative to which $\QMA^O \not \subset \polyQCPH^O$.
\end{theorem}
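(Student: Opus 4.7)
The plan is to reuse the distributional oracle of \cite{LLPY24} from Lemma~\ref{lem:distquerylowerbound} and to define $L^O$ by $1^n \in L^O$ iff the oracle at length $n$ is drawn from the $b = 1$ family (parametrized by some $H_n$ that will be chosen in the construction). The $\QMA^O$ containment falls out directly from Part 1 of Lemma~\ref{lem:distquerylowerbound}: the quantum witness $\ket{z_{H_n}}$, combined with the verifier $\mathcal A$, distinguishes the two families with probability $1 - \negl(n)$, and standard amplification yields the $2/3$ vs.\ $1/3$ gap required by the definition of $\QMA$. All of the work therefore lies in the $\polyQCPH$ lower bound.

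I will proceed by diagonalization in the spirit of Theorem~\ref{thm:QMA-polyQCPH-U}: enumerate $\polyQCPH$ verifiers $\{A_i\}_{i}$ with their polynomial time and witness-count bounds, choose an increasing sequence of input lengths $\{n_i\}$, and at each $n_i$ pick $H_{n_i}$ together with $b_{n_i} \in \{0,1\}$ so that $A_i$ errs at $1^{n_i}$. The crucial technical step -- the analogue of Theorem~\ref{thm:lowerbound} in the quantum case -- is to show that for every fixed $A_i$ and every sufficiently large $n$ there exists some $H$ with $V_1(H) < 2/3$ or $V_0(H) > 1/3$, where
\[V_b(H) \;=\; \max_\sigma \min_\tau \Pr_{r,\,\text{coins}}\!\bigl[A_i^{O_n^b[H,r]}\!\bigl(w(\sigma_H, \tau_H),\, r\bigr) = 1\bigr]\]
and $\sigma, \tau$ range over yes-player and no-player strategies that may depend on $H$ and on the opponent's history so far (the prover sees $H$ but not $r$), producing a polynomial-length transcript $w(\sigma_H, \tau_H) = (y_1,\ldots,y_T)$.

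Suppose for contradiction that every $H$ satisfies both $V_1(H) \geq 2/3$ and $V_0(H) \leq 1/3$. Averaging over $H$ gives $V_1 := \E_H V_1(H) \geq 2/3$ and $V_0 := \E_H V_0(H) \leq 1/3$. The key collapse: fix $\sigma^*$ achieving the outer maximum of $V_1$, so that $\min_\tau \E_H P_1(H, \sigma^*, \tau) = V_1 \geq 2/3$ (with $P_b$ denoting the inner probability) and, in particular, $\E_H P_1(H, \sigma^*, \tau) \geq V_1$ for \emph{every} $\tau$; then since $V_0 \leq 1/3$, plugging the same $\sigma^*$ into $V_0$ yields $\min_\tau \E_H P_0(H, \sigma^*, \tau) \leq V_0 \leq 1/3$, and we let $\tau^*$ be a minimizer. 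The joint play of $(\sigma^*, \tau^*)$ produces, for each $H$, a single polynomial-length transcript $w^*(H)$. Viewing $A_i$ with $w^*$ plugged in as a $\QCMA$-style algorithm with classical witness depending only on $H$, Part 2 of Lemma~\ref{lem:distquerylowerbound} gives
\[\left|\Pr_{(H,r)\sim\mathcal D_n}\!\bigl[A_i^{O_n^1[H,r]}(w^*(H), r) = 1\bigr] - \Pr_{(H,r)\sim\mathcal D_n}\!\bigl[A_i^{O_n^0[H,r]}(w^*(H), r) = 1\bigr]\right| \leq \negl(n).\]
But the first term equals $\E_H P_1(H, \sigma^*, \tau^*) \geq 2/3$ and the second equals $\E_H P_0(H, \sigma^*, \tau^*) \leq 1/3$, so the left-hand side is at least $1/3$ -- a contradiction for large $n$. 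Hence a bad $H_{n_i}$ exists for $A_i$, and we set $b_{n_i}$ to match whichever threshold fails.

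The main point to get right, and what I expect will be the chief subtlety, is the formulation of $V_b$ in the distributional oracle model: since $\polyQCPH$ witnesses depend on $H$ but not on the support index $r$, the $\max\min$ over strategies commutes with $\E_H$ but must sit outside $\E_r$, which is exactly what lets the joint transcript $w^*$ be a well-defined polynomial-length function of $H$ alone and hence a valid witness of the form handled by Part 2. As discussed in Section~\ref{sec:disc}, the analogous collapse fails for the corresponding classical oracle candidates, because there a classical witness may depend on the full oracle realization (not only on the distribution parameter $H$), so fixing the transcript of $(\sigma^*, \tau^*)$ no longer yields a function of the prover's limited view.
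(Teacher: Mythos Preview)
Your min-max collapse is correct and is the heart of the argument: once $V_1(H)\ge 2/3$ and $V_0(H)\le 1/3$, taking $\sigma^*_H$ optimal for $V_1(H)$ and then $\tau^*_H$ the best response to $\sigma^*_H$ in $P_0$ produces a single transcript $w^*(H)$ of polynomial length depending only on $H$, with $\E_H P_1(H,w^*(H))\ge 2/3$ and $\E_H P_0(H,w^*(H))\le 1/3$, directly contradicting Part~2 of Lemma~\ref{lem:distquerylowerbound}. This is a genuinely different route from the paper, which never fixes a transcript: instead it samples $\{b_n,H_n\}$ uniformly at random, writes the success event $S_V(n)$ as $\tfrac12(\Pr[E_1]+\Pr[E_2])$, applies Markov to the Part-2 bound to get $\Pr[E_1\wedge E_2]\le \tfrac{4}{5}+o(1)$ and hence $\Pr[S_V(n)]\le \tfrac{11}{12}$, and then concludes by a standard probability-1 argument over countably many verifiers. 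Your reduction is cleaner on the lower-bound side; the paper's probabilistic framing has the advantage that the $\QMA$ upper bound (also a probability-1 event, cited from \cite{LLPY24}) automatically coexists with the lower bound without further work.

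On exactly that point there is a small gap in your diagonalization. You must pick a single $H_{n_i}$ that is simultaneously bad for $A_i$ and good for the $\QMA$ verifier $\mathcal A$, but Part~1 of the lemma only guarantees the latter on average over $H$, and your contradiction as written only produces \emph{one} bad $H$. The fix is immediate and in the spirit of your argument: replace ``every $H$'' by ``a $1-\delta$ fraction of $H$'' in the contradiction hypothesis; your construction of $w^*(H)$ still works on that set, giving $\E_H[P_1-P_0]\ge (1-\delta)\cdot\tfrac13-\delta$, which violates Part~2 unless $\delta$ is bounded below by a constant. Then intersect this constant fraction of bad-for-$A_i$ functions with the $1-\negl(n)$ fraction of $H$ on which $\mathcal A$ succeeds (Markov applied to Part~1) to obtain the desired $H_{n_i}$.
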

\begin{proof}
    Let $L$ be a unary language chosen uniformly at random. For every $n$, set $b_n = 1$ iff $1^n \in L$. Then the oracle $O = \{O_n\}_{n \in \N}$ is sampled as $O_n := (O_n^{b_n}[H_n, r_n], r_n)$ where $(H_n, r_n) \sim \mathcal{D}_n$. Then \cite{LLPY24} show that $L \in \QMA^O$ with probability $1$ over the choice of $\{(b_n, H_n, r_n)\}_{n \in \N}$. Eventually the oracle distribution fixes $H_n$ for every $n$, and so the quantum and classical witnesses for $\QMA^O$ and $\polyQCPH^O$ verifiers are allowed to depend on $H_n$; however, the distribution over $r_n$ remains, so the verifiers are now allowed to depend on $r_n$.\\
    \\
    We now show that $L \notin \polyQCPH^O$ with probability $1$ over the choice of $\{(b_n, H_n, r_n)\}_{n \in \N}$. Fix a verifier $V^O$ which takes $q(n)$ alternatively quantified proofs of length $q(n)$ on inputs of length $n$, where $q:\N \rightarrow \N$ is a polynomial. Let $S_V(n)$ be the event that $V^O$ succeeds on $1^n$, i.e.,
    \begin{align*}
        S_V(n) = [\exists y_1\forall y_2 \ldots Q_{q(n)}y_{q(n)} \Pr[V^{O}(1^n, y_1, \ldots, y_{q(n)}) = 1 | b_n = 1] \geq 2/3] \\ \vee [\forall y_1\exists y_2 \ldots \overline{Q}_{q(n)}y_{q(n)} \Pr[V^{O}(1^n, y_1, \ldots, y_{q(n)}) = 1 | b_n = 0] \leq 1/3]
    \end{align*}
    Then consider an algorithm $\mathcal{B}^{O_n^{b_n}[H_n,r_n]}$ which has $H_i, b_i$ hardcoded in it for $i \neq n$. It randomly samples $r_i$ for $i \neq n$, gets $r_n$ as input and simulates the behaviour of $V^O$. In particular, for every query that $V^O$ makes $O_i$ such that $i \neq n$, $\mathcal{B}$ uses its hardcoded value of $H_i, b_i$ and the sampled value of $r_i$ to generate the query response. If $V^O$ queries a bit of $O_n$, then $\mathcal{B}$ queries its own oracle, and if $V^O$ queries a bit of $r_n$, then $\mathcal{B}$ uses its input value of $r_n$. Then
    \begin{align*}
        \Pr_{H_n, r_n}[\mathcal{B}^{O_n^{b_n}[H_n, r_n]}(y_1, \ldots, y_{q(n)}, r_n) = 1] = \Pr_{\{r_i\}_i, H_n}[V^O(1^n, y_1, \ldots, y_{q(n)}) = 1|\{H_i, b_i\}_{i \neq n}, b_n]
    \end{align*}
    Hence, we have that $\Pr_{\{(b_i, H_i, r_i)\}_i}[S_V(n)] = \frac{1}{2}(\Pr[E_1] + \Pr[E_2])$ where $E_1$ and $E_2$ are the following events (note that the witnesses $y_i$ can depend on $H_n$, but not on $r_n$):
    \begin{align*}
        E_1 &= \exists y_1\forall y_2 \ldots Q_{q(n)}y_{q(n)} \Pr[\mathcal{B}^{O_n^{1}[H_n, r_n]}(y_1, \ldots, y_{q(n)}, r_n) = 1] \geq 2/3 \\
        E_2 &= \forall y_1\exists y_2 \ldots \overline{Q}_{q(n)}y_{q(n)} \Pr[\mathcal{B}^{O_n^{0}[H_n, r_n]}(y_1, \ldots, y_{q(n)}, r_n) = 1] \leq 1/3
    \end{align*}
    We know from Lemma \ref{lem:distquerylowerbound} that for all polynomial sized witnesses $w_{H_n}$,
    \begin{align*}
        \left|\Pr_{H_n, r_n}[\mathcal{B}^{O_n^1[H_n,r_n]}(w_{H_n}, r_n) = 1] - \Pr_{H_n, r_n}[\mathcal{B}^{O_n^0[H_n,r_n]}(w_{H_n}, r_n) = 1]\right| \leq \mathsf{negl}(n)
    \end{align*}
    On applying Markov's inequality,
    \begin{align*}
        \Pr_{H_n, r_n}\left[\left|\Pr[\mathcal{B}^{O_n^1[H_n,r_n]}(w_{H_n}, r_n) = 1] - \Pr[\mathcal{B}^{O_n^0[H_n,r_n]}(w_{H_n}, r_n) = 1]\right| + 1 \geq \frac{5}{4}\right] \leq \frac{4}{5}(1+\mathsf{negl}(n))
    \end{align*}
    Therefore, for at least $1-\frac{4}{5}(1+\mathsf{negl}(n))$ fraction of $(H_n, r_n)$:
    \begin{align*}
        \left|\Pr[\mathcal{B}^{O_n^1[H_n,r_n]}(w_{H_n}, r_n) = 1] - \Pr[\mathcal{B}^{O_n^0[H_n,r_n]}(w_{H_n}, r_n) = 1]\right| < \frac{1}{4}
    \end{align*}
    Therefore, for any such choice of $(H_n, r_n)$, if $E_1$ occurs then $E_2$ does not occur. So $\Pr[E_1 \wedge E_2] \leq \frac{4}{5}(1+\mathsf{negl}(n))$. Thus
    \begin{align*}
        \Pr[S_V(n)] &= \frac{1}{2}(\Pr[E_1] + \Pr[E_2]) \\
        &\leq \frac{1}{2}(\Pr[E_1 \wedge \neg E_2] + \Pr[E_1 \wedge E_2] + \Pr[E_2]) \\
        &\leq \frac{1}{2}(\Pr[\neg E_2] + \frac{4}{5}(1+\mathsf{negl}(n)) + \Pr[E_2]) \\
        &\leq \frac{11}{12}
    \end{align*}
    Then standard arguments can be used (as also done in \cite{LLPY24}) to show that
    \begin{align*}
        \Pr_{\{(b_i, H_i, r_i)\}_i}[\exists V \bigwedge_{n = 1}^\infty S_V(n)] = 0
    \end{align*}
    So $L \notin \polyQCPH^O$ with probability $1$ over the choice of $\{b_n, H_n, r_n\}_n$. By fixing $H_n, b_n$ for every $n$, we get a language $L$ such that $L \in \QMA^O$ and $L \notin \polyQCPH^O$, where $O$ is now the uniformly random distribution over $\{r_n\}_n$.
\end{proof}

\paragraph{Discussion.} The above proof works due to item 2 of Lemma~\ref{lem:distquerylowerbound}. We suspect a statement such as item 2 can only hold for distribution oracles. Indeed, it can be verified for this specific oracle that when the classical proofs $z$ are allowed to depend on $r$ as well as $H$, this statement no longer holds. Since in a standard classical oracle, the proofs would depend on $z$ as well, this proof technique would thus not work for classical oracles.

\section*{Acknowledgements}
We thank Shalev Ben{-}David for discussion on distributional oracles, and William Kretschmer for discussion about the $\QMA$ vs $\QCMA$ candidate in \cite{Zha25} being contained in $\PH$. This work was completed while S. K. was at the Institute for Quantum Computing at the University of Waterloo. A. A. is supported in part by a Cheriton Graduate Scholarship from the School of Computer Science at the University of Waterloo.

\printbibliography

\end{document}